\newcommand{\vast}{\bBigg@{4}}
\newcommand{\Vast}{\bBigg@{5}}
\newtheorem{theorem}{Theorem}[]
\newtheorem{lemma}[]{Lemma}
\newtheorem{corollary}{Corollary}[theorem]
\theoremstyle{definition}
\begin{document}
	\bstctlcite{IEEEexample:BSTcontrol}
	\title{On the Asymptotic Performance Analysis of the k-th Best Link Selection over Non-identical Non-central Chi-square Fading Channels}
	\author{Athira Subhash, Sheetal Kalyani, Yazan H. Al-Badarneh, and Mohamed-Slim Alouini {\let\thefootnote\relax\footnote{{Athira Subhash, and Sheetal Kalyani are with the Department of Electrical Engineering, Indian Institute of Technology, Madras, India. (email:\{ee16d027@smail,skalyani@ee\}.iitm.ac.in).\\Yazan H.Al-Badarneh is with Department of Electrical Engineering, University of Jordan, Amman, Jordan.(email:albadarneh@tamu.edu).\\ M.-S. Alouini is with King Abdullah University of Science and Technology (KAUST), Thuwal, Makkah Province, Saudi Arabia. (e-mail:{slim.alouini}@kaust.edu.sa).}}}
	}
	
	\maketitle
	\begin{abstract}
		\textcolor{black}{This paper derives the asymptotic distribution of the normalized $k$-th maximum order statistics of a sequence of non-central chi-square random variables with non-identical non-centrality parameter. We demonstrate the utility of these results in characterizing the signal to noise ratio in three different applications in wireless communication systems where the statistics of the $k$-th maximum channel power over Rician fading links are of interest. }Furthermore, we derive simple expressions for the asymptotic outage probability,  average throughput, achievable throughput, and the average bit error probability. The proposed results are validated via extensive Monte Carlo simulations.
	\end{abstract}
	
	\begin{IEEEkeywords}
		Non-central chi-square, rician fading, extreme value theory, i.n.i.d.
	\end{IEEEkeywords}
	
	\section{Introduction}
	
	Non-central chi-square random variables (RVs) are widely encountered both in the statistics and the communication literature \cite{mooney1999performance,ding1994efficient,samir1990computation,hemachandra2011novel,bhatnagar2013capacity,qian2020beamforming}. Characterization of the order statistics of non-central chi-square RVs are of interest in several of these applications  \cite{hemachandra2011novel,bhatnagar2013capacity,le2017selection}. The cumulative distribution function (CDF) of a non-central chi-square RV is available in terms of the Marcum-Q function \cite{molisch2012wireless}. Hence the exact order statistics of the non-central chi-square RV are expressed as complicated functions involving the Marcum-Q function. This is particularly the case when the order statistics are evaluated over large sequences of RVs. Extreme value theory (EVT) is a very useful tool to characterise the asymptotic extreme order statistics in such scenarios \cite{falk2010laws,de2007extreme}.
	\par Tools from EVT has been utilized in diverse fields for the asymptotic performance analysis of systems involving extreme order statistics \cite{de2007extreme,al2018asymptotic,song2006asymptotic,subhash2020transmit,xue2015performance,alperformance,falk2010laws,krishnamurthy2019peak,bai2009rate}. All of these works make use of the classical Fischer-Tippet theorem to characterize the normalized extreme order statistics of a sequence of independent and identically distributed (i.i.d.) RVs \cite{falk2010laws}. Although there are results in statistics that extend this classical result to the case of independent and non identically distributed (i.n.i.d.) RVs, this result has not been utilized in the communication theory literature, until recently \cite{subhash2021cooperative}. The classical EVT approach for deriving the statistics of extremes begins by identifying the maximum domain of attraction of the sequences of identical RVs and then identifying the corresponding normalizing constants. The theory for the case of i.n.i.d. RVs proceeds differently and hence provide another interpretation of the extreme order statistics, even for the case of i.i.d. RVs.
	\par In this work, we make use of the results from \cite{barakat2002limit} to characterize the $k$-th maximum order statistics of a sequence of i.n.i.d. non-central chi-square RVs with two degrees of freedom (dof). Even for the case of i.i.d. RVs, the asymptotic order statistics of the maximum RV are not available in closed form. The authors of \cite{xue2015performance} use EVT to characterize the maximum order statistics of the signal to noise ratio (SNR) in a multi-way relaying scenario where the channel experiences i.i.d. Rician fading. However, using the classical EVT approach, it is difficult to derive the parameters of the proposed extreme value distribution in closed form. Hence, they derive the lower and upper bounds of the maximum
	value of the channel gain, which follows the non-central chi-square distribution. {The proposed results can be used to derive exact closed-form expressions for the normalizing constants for the case of i.i.d. RVs as well and hence is observed to be tighter than the results proposed by \cite{xue2015performance}.} 
	\par Several results from the statistics literature have been effectively used to analyze order statistics in the communication literature. The authors of \cite{ikki2010performance} use order statistics to study the statistics of the output SNR in a cooperative relay network with the $k$-th best relay selection scheme. The first and second-order statistics of the signal to interference ratio in a system, where the source signal experiences Rician fading and the interferer signals experience Nakagami-m fading, is studied by the authors of \cite{milic2016second}. The second-order statistics of the channel gain in the more general $\kappa-\mu$ shadowed fading model is used to derive closed-form expressions for the level crossing rate and average fade duration in \cite{cotton2015second}.
	\par In all the above literature, one can notice that the exact statistics of the $k$-th extreme order statistics has a very complicated expression and is challenging to evaluate when the extremes are evaluated over a sequence of a large number of RVs. This is especially the case when each of the RVs in the sequence has a CDF with a complicated expression. Hence, in this work, we derive simple expressions for the CDF of the $k$-th order statistics using results from EVT. Using the results from \cite{barakat2002limit}, we first characterize the asymptotic distribution of the normalized $k$-th maximum SNR in closed form for the case of extremes over a sequence of i.i.d. RVs. {We then characterize the distribution of the normalized maximum SNR when the line of sight (LOS) path corresponding to the different signal links experience statistically non-identical fading.} To the best of our knowledge, this is the first work providing closed-form expressions for the asymptotic distribution of the $k$-th maximum SNR in a Rician fading scenario. {We also demonstrate how the simple form of the asymptotic results proposed can be used to establish stochastic ordering results of the maximum RV. The theory of stochastic ordering quantifies the concept of one RV being bigger than another \cite{shaked2007stochastic}. These ordering results can in-turn, be used to study the variations in system performance with respect to the variations in different system parameters.}
	\par Use of non-terrestrial communication links is considered to be one promising enabler for meeting the requirements of the future generations of communication systems \cite{murali_aaingn,gopi2021cooperative,motlagh2016low}. In the era of IoT, we expect a proliferation of communicating nodes in the network, many of which transmit critical information sporadically. Hence, it is important that these nodes are protected from malicious eavesdroppers and that the information reaches the information processing unit reliably \cite{srinivasan2018secrecy,lei2019safeguarding}. In this regard, the first application that we consider studies a unmanned aerial vehicle (UAV) assisted internet of things (IoT) communication network where the IoT device communicates in the presence of multiple eavesdroppers. In such a scenario, the secrecy is always measured with respect to the strongest eavesdropper link, and hence we are interested in characterizing the maximum eavesdropper SNR. The derived maxima statistics can be used for the performance analysis of such a system. Next, we also study an IoT communication network where multiple UAVs are available for information reception. These UAVs can themselves act as information processing units or can be relay nodes capable of transmitting the data to a common information processing node with high reliability. Given that multiple UAVs are available, the IoT device selects the UAV with the $k$-th best SNR link in each transmission slot. Note that the UAV corresponding to the link with the maximum SNR is always preferred, but if the particular UAV is not available for communication, the IoT device will select the $k$-th best UAV. In such a scenario also, we are interested in characterizing the statistics of the channel power of the $k$-th best link. Furthermore, in both of these scenarios, there is a very high probability for a LOS link between the IoT device and the UAV/eavesdropper, and hence the channel between the nodes are assumed to experience independent Rician fading. Moreover, the different UAVs/eavesdroppers can experience different scattering environments, and therefore these links may not experience identical channel fading conditions. This, in turn, motivates the study of the asymptotic statistics of the received SNR in these scenarios over i.n.i.d. channels. 
	
	\par We also demonstrate the utility of the results derived in characterizing the received SNR of a $k$-th best selection combining (SC) receiver where each signal link experiences independent Rician fading. Diversity combining has long been accepted as an effective method to combat issues like fading, poor coverage, and shadowing \cite{simon2005digital}. The performance of different diversity combining techniques have been extensively studied under different performance metrics in the literature \cite{eng1996comparison,chen2005analysis,gorokhov2003receive,alouini2000mgf,srinivasan2019analysis}. The choice of a particular combining technique depends upon the computational and power constraints of the corresponding system. Out of the different combining techniques, SC is one of the simplest methods with the least computational complexity. Moreover, SC requires only one radio frequency chain at the receiver and hence consumes less power. When compared to a single link between the transmitter and the receiver, the SC system demonstrates better performance with minimal power and computational requirements.
	\par An SC system may select the best link or, more generally, the $k$-th best link in terms of the desired performance metric for information transmission/processing. Statistical analysis of such SC systems has been demonstrated to provide valuable insights that facilitate performance analysis and effective resource allocation \cite{kong2000snr,al2018asymptotic,subhash2019asymptotic}. The proposed results can also be used to analyze these SC systems.  {Hence, in this work, we demonstrate three different applications in the communication literature where we need the statistics of the $k$-th maximum of non-central chi-square RVs for characterizing the system performance. As discussed above, two of these applications are in the field of UAV assisted IoT communication systems, and the third application considers an SC receiver.}
	\color{black}
	\par {The rest of the paper is organized as follows. We present the derivation of the $k$-th maximum order statistics of the non-central chi-square RVs with two dof in section \ref{sec_2a} and Section \ref{sec_2b}. Next, we present stochastic ordering results for the derived maximum distribution in Section \ref{stochastic_ordering}. Using the derived distribution of the $k$-th maximum, we derive expressions for the asymptotic outage probability, average throughput, achievable throughput, and the average bit error probability (BEP) in { Section \ref{analysis_spcl} and Section \ref{analysis_all}}. We then present three different applications of the proposed results in Section \ref{sys_model}. Next, in section \ref{simulations}, we verify the results presented in the paper using simulation experiments, and we conclude the paper in Section \ref{conclusion}.}

	\section{Distribution of the k-th maximum order statistics} \label{sec_kth_max}
	In this section, we derive the distribution of the normalised $k$-th maximum order statistics of a sequence of i.n.i.d. non-central chi-square RVs with two dof. Let $\{Z_m\}_{m=1}^M$ represent a sequence of $M$ non-central chi-square RVs with CDF given by,
	\begin{equation}
	F_{Z_m}(z) = 1-\mathcal{Q}_1 \left(\frac{\nu_m}{\sigma},\frac{\sqrt{z}}{\sigma} \right),
	\label{cdf_zm}
	\end{equation}
	where $\mathcal{Q}_1(,.,)$ is the Marcum-Q function \cite{molisch2012wireless}. {Here, $\frac{\nu_m}{\sigma}$ is the non-centrality parameter of the non-central chi square RV\footnote{{Note that $F_{Z_m}(z)$ also represents the CDF of the square of a Rician RV with shape parameters $\nu_m$ and $\sigma$. According to the Rician channel fading model this corresponds to the scenario where $\nu_m^2$ is the power in the direct path and $2\sigma^2$ is the total power in the scattered paths.}}}. The corresponding order statistics are then given by $Z_{ (1: M)} \leq Z_{(2: M)} \leq \cdots \leq Z_{(M:M)}$, where the $k$-th order statistic is given by $Z_{(M-k+1:M)}$.
	The exact CDF of $Z_{(M-k+1:M)}$, is given by \cite[(5.2.1)]{david2003order}
	\begin{equation}
	F_{Z_{(M-k+1:M)}}(z)=\sum_{m=k}^{M} \sum_{S_{m}} \prod_{r=1}^{m} F_{Z_{j_r}}(z) \prod_{r=m+1}^{M}\left[1-F_{Z_{j_r}}(z)\right], \ k=1,2,\cdots,M,
	\label{cdf_kth_max_exact}
	\end{equation}
	where the summation $S_{m}$ is over all the permutations $\left(j_{1}, \ldots, j_{M}\right)$ of $1, \ldots, M$ for which $j_{1}<\cdots<j_{m}$ and $j_{m+1}<\cdots<j_{M}$. Note that the exact CDF of $Z_{(M-k+1:M)}$ in (\ref{cdf_kth_max_exact}) will have a very complicated expression when each of the RV $Z_m$ follows the distribution given in (\ref{cdf_zm}). Furthermore, it will not be easy to use this expression for deriving the statistics of other functions of the $k$-th order statistic or to study the effect of the different parameters on the order statistics. Hence in this section, we use tools from EVT to derive the distribution of $Z_{(M-k+1:M)}$ for large $M$. We first derive the distribution of the normalized $k$-th maximum RV $\tilde{Z}_{(M-k+1:M)} = \frac{Z_{(M-k+1:M)}-b_M}{a_M}$ for normalizing constants $a_M$ and $b_M$. Using these results, we characterize the distribution of $Z_{(M-k+1:M)}$. It should be noted that D.G. Mejzler had studied the order statistics of a sequence of i.n.i.d. RVs in \cite{mejzler1969some}. Then, the authors of \cite{barakat2002limit} utilized this result as well as few other related literature \cite{weissman1975extremal,juncosa1949asymptotic} to study the limiting behaviour of the extreme order statistics of $M$ two-dimensional i.n.i.d. random vectors. The key theorem they utilize is reproduced here for the case of uni-variate RVs. Necessary and sufficient conditions under which the distribution of $\tilde{Z}_{(M-k+1:M)}$ converges weakly to a non degenerate limit, as well as the form of this limit, are presented in {Lemma \ref{thm_order_stat}} .
	
	\subsection{On the Asymptotic Distribution of the normalized $k$-th Maximum}\label{sec_2a}
	
	\begin{lemma} \label{thm_order_stat} \color{black}
		Assume that for suitable normalizing constants $a_{M}>0,$ $b_{M}$
		\begin{equation}
		\delta_{M}=\max _{1 \leq m \leq M} 1-F_{Z_m} \left(a_Mz+b_M\right) \rightarrow 0 \text { as } M \rightarrow \infty.
		\label{ua1}
		\end{equation}
		Then $\tilde{\phi}_{k:M}(z) = \mathbb{P}\left(\frac{Z_{(M-k+1:M)}-b_M}{a_M} \leq z \right)$ converges weakly to a non degenerate distribution function $\tilde{\phi}_{k}\left(z\right)$ if and only if, for all $z$ for which $\tilde{\phi}_{k}\left(z\right)>0$, the limit
		\begin{equation}
		\tilde{u}(z) = \lim _{M \rightarrow \infty} \sum_{m=1}^{M} 1-F_{Z_m}(a_Mz+b_M) \ \text{is finite,}
		\label{ua2}
		\end{equation}
		and the function
		\begin{equation}
		\tilde{\phi}_{k}\left(z\right) = \sum\limits_{m=0}^{k-1} \frac{\tilde{u}^m(z)}{m!} \exp (-\tilde{u}(z)), \ \text{is a non degenerate distribution.}
		\label{ua3}
		\end{equation}
		The actual limit of $\tilde{Z}_{(M-k+1:M)}=\frac{Z_{(M-k+1: M)}-b_{M}}{a_{M}} $ is the one given in (\ref{ua3}). 
	\end{lemma}
	
	\begin{proof}
		Please refer \cite{barakat2002limit} for the detailed proof.
	\end{proof}
	\color{black}
	Furthermore, they make the following observations. The convergence of $\tilde{\phi}_{k:M}\left(z\right)$ for at least one fixed value of $k$ implies its convergence for all fixed values of $k$. Note that we can recover the results for the maximum RV using these results for $k=1$. {Using Lemma \ref{thm_order_stat}, we identify normalizing constants $a_M$ and $b_M$ and hence characterize the distribution of $\tilde{Z}_{(M-k+1:M)}$ for the special case where the parameter $\nu_m$ are non-identical across the users and the parameter $\sigma$ is identical across the users. Note that in a Rician fading environment, this special case corresponds to the scenario where different users receive unequal power along the direct LOS path and equal sum power across all the scattered paths.} The corresponding results are presented in the following theorem.
	
	\subsection{$k$-th Maximum of Non-Central Chi Square RVs }\label{sec_2b}
	\begin{theorem} \label{thm_kth_max}
		The asymptotic CDF of the normalized $k$-th maximum $\left(\tilde{Z}_{(M-k+1:M)}\right)$ of a sequence of non-central chi-square RVs with CDF as given in (\ref{cdf_zm}) is given by,
		\begin{equation}
		F_{\tilde{Z}_{(M-k+1:M)}}(z) = \frac{\Gamma(k,p \exp(-z))}{\Gamma(k)},
		\label{k_th_max_cdf}
		\end{equation}
		for the choice of normalizing constants $a_M=2 \sigma^2$ and 
		\begin{equation}
		b_M=a_M\left(\log(\tilde{M}) - \frac{\log(\log(\tilde{M}))}{4}  + \frac{\tilde{\nu} \sqrt{2}}{\sigma} (\log(\tilde{M}))^{\frac{1}{2}}  - \frac{1}{2} \log \left( \frac{2 \sqrt{2}\pi \tilde{\nu}}{\sigma}\exp\left(\frac{-\tilde{\nu}^2}{\sigma^2} \right) \right)\right).
		\label{b_M_gen}
		\end{equation}
		{Here, $p=\sum\limits_{i=1}^K \frac{M_i}{\tilde{M}} \sqrt{\frac{\tilde{\nu}}{\nu_i}} \exp \left(\frac{-\nu_i^2-\tilde{\nu}^2}{2 \sigma^2} \right)  \exp\left(\frac{\sqrt{2}}{\sigma} \left((\nu_i-\tilde{\nu})\sqrt{\log(\tilde{M})}+\frac{\nu_i\tilde{\nu}}{\sqrt{2} \sigma} \right)\right)$. Furthermore, $\tilde{\nu}$ is the largest element in $ R_{\nu}$ such that 
			\begin{equation}
			\lim\limits_{M \to \infty} \frac{M_i}{\tilde{M}} \exp \left(\frac{\sqrt{2\log(\tilde{M}{})}(\nu_i-\tilde{\nu})}{\sigma} \right) < \infty \quad  \text{for} \quad \nu_i\neq \tilde{\nu},
			\label{m_condn1}
			\end{equation}
			where $\tilde{M} = \sum\limits_{m=1}^M \mathbb{I}_{\nu_m=\tilde{\nu}}$, $M_i := \sum\limits_{m=1}^M \mathbb{I}_{\nu_m=\nu_i}, \ 1\leq i \leq K$, i.e $M_i$ represents the number of times $\nu_i$ occurs among the $M$ values $\{\nu_m\}_{m=1}^M$, and $R_{\nu}$ is a set of finite cardinality which contains all the possible values of the parameter $\nu_m$. }
	\end{theorem}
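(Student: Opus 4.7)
The plan is to invoke Theorem \ref{thm_order_stat} with the specified normalising constants $a_M=2\sigma^2$ and $b_M$ given by (\ref{b_M_gen}), and to match (\ref{k_th_max_cdf}) with (\ref{ua3}) via the finite-sum identity $\Gamma(k,x)/\Gamma(k)=e^{-x}\sum_{m=0}^{k-1} x^m/m!$, which holds for every positive integer $k$. The two statements coincide precisely when $\tilde u(z)=p\,\exp(-z)$, so the proof reduces to (i) verifying the uniform smallness condition (\ref{ua1}) and (ii) evaluating the sum in (\ref{ua2}) as $p\,e^{-z}$.

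The principal analytical tool is the classical large-argument expansion of the Marcum-Q function, $\mathcal{Q}_1(\alpha,\beta)\sim\sqrt{\beta/\alpha}\cdot\exp\bigl(-(\beta-\alpha)^2/2\bigr)\big/\bigl((\beta-\alpha)\sqrt{2\pi}\bigr)$, valid whenever $\beta\to\infty$ with $\beta-\alpha\to\infty$. Writing $B_M=b_M/\sigma^2$ and $\beta_M=\sqrt{a_Mz+b_M}/\sigma=\sqrt{2z+B_M}$, I would Taylor-expand $\beta_M=\sqrt{B_M}+z/\sqrt{B_M}+O(B_M^{-3/2})$ and, since $B_M$ contains a $(2\sqrt{2}\tilde\nu/\sigma)\sqrt{\log\tilde M}$ piece, also expand $\sqrt{B_M}=\sqrt{2\log\tilde M}+\tilde\nu/\sigma+o(1)$. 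Substituting into the Marcum-Q asymptotic yields an explicit asymptotic form of $1-F_{Z_m}(a_Mz+b_M)$; condition (\ref{ua1}) is immediate from the dominant $e^{-B_M/2}$ decay.

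For (ii) I would regroup the sum $\sum_{m=1}^M \mathcal{Q}_1(\nu_m/\sigma,\beta_M)$ by the distinct values $\nu_i\in R_\nu$ with multiplicities $M_i$, writing it as $\sum_{i=1}^K M_i\,\mathcal{Q}_1(\nu_i/\sigma,\beta_M)$. The explicit $b_M$ is engineered so that each $\tilde M$-dependent factor telescopes: the $2\sigma^2\log\tilde M$ piece of $b_M$ cancels $e^{-B_M/2}$ against the multiplicity $\tilde M$; the $\sqrt{2\log\tilde M}\,\tilde\nu/\sigma$ shift in $\sqrt{B_M}$ combines with the Gaussian cross-term to produce the factor $\exp(\nu_i\tilde\nu/\sigma^2)$; the $-\frac{1}{4}\log\log\tilde M$ piece absorbs the $B_M^{-1/4}$ arising from combining $\sqrt{\beta_M/\alpha_m}$ with $(\beta_M-\alpha_m)^{-1}$; and the final constant piece in $b_M$ absorbs the remaining $\sqrt{\tilde\nu/\sigma}\exp(-\tilde\nu^2/(2\sigma^2))/\sqrt{2\pi}$ factor. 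The only surviving $\tilde M$-dependence is the ratio $(M_i/\tilde M)\exp(\sqrt{2\log\tilde M}(\nu_i-\tilde\nu)/\sigma)$, which is exactly the quantity assumed finite in (\ref{m_condn1}). Summing the index-$i$ contributions produces $\tilde u(z)=p\,e^{-z}$, whence (\ref{k_th_max_cdf}) follows.

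The main obstacle I anticipate is the bookkeeping in step (ii): subleading corrections in the Taylor expansion of $\beta_M$ feed through both the Gaussian exponent and the algebraic prefactor $\sqrt{\beta_M/\alpha_m}/(\beta_M-\alpha_m)$, and the elaborate form of $b_M$ in (\ref{b_M_gen}) is the unique sequence of $\log$, $\sqrt{\log}$, $\log\log$, and constant corrections required to neutralise each of these terms. A secondary subtlety is that $\tilde\nu$ must be defined as the \emph{largest} element of $R_\nu$ for which the ratio in (\ref{m_condn1}) remains finite; otherwise the off-dominant contributions in the regrouped sum diverge rather than contribute finitely to $p$.
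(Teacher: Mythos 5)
Your proposal is correct and follows essentially the same route as the paper's Appendix~\ref{proof_main}: invoking Theorem~\ref{thm_order_stat}, applying the large-argument Marcum-Q/Gaussian-$Q$ asymptotic, regrouping the sum over the $K$ distinct values $\nu_i$ with multiplicities $M_i$, expanding $\sqrt{b_M}$ (the paper's complete-the-square/binomial step, your $\sqrt{B_M}=\sqrt{2\log\tilde M}+\tilde\nu/\sigma+o(1)$) so that each piece of $b_M$ cancels the corresponding growth factor, and using condition~(\ref{m_condn1}) to keep the off-dominant terms finite, yielding $\tilde u(z)=p\,e^{-z}$ and hence~(\ref{k_th_max_cdf}). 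The only differences are cosmetic: you verify~(\ref{ua1}) directly rather than via Mejzler's conditions on $a_M,b_M$, and your stated residual constant is slightly garbled in the bookkeeping, but the cancellation structure you describe is exactly the paper's.
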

	\begin{proof}
		Please refer Appendix \ref{proof_main} for the proof.
	\end{proof}
	Here, $\Gamma(a,x)$ is the upper incomplete gamma function \cite{upper_gamma} and $\Gamma(x)$ is the gamma function \cite{gamma} where $\Gamma(a)=\Gamma(a,0)$. Next, we present two special cases of the above result in Corollary \ref{corollary_iid} and Corollary \ref{corollary_iid2}. 
	
	\begin{corollary} \label{corollary_iid}
		The asymptotic CDF of the normalized $k$-th maximum of a sequence of $M$ i.i.d. non-central chi-square RVs with two dof is given by 
		\begin{equation}
		F_{\tilde{Z}_{(M-k+1:M)}}(z) = \frac{\Gamma(k,\exp(-z))}{\Gamma(k)},
		\label{k_th_max_cdf_iid}
		\end{equation}
		for the choice of normalizing constants $a_M=2 \sigma^2$ and 
		\begin{equation}
		b_M=a_M\left(\log({M}) - \frac{\log(\log({M}))}{4}  + \frac{{\nu} \sqrt{2}}{\sigma} \sqrt{\log({M})} - \frac{1}{2} \log \left( \frac{2 \sqrt{2}\pi {\nu}}{\sigma}\exp\left(\frac{-{\nu}^2}{\sigma^2} \right) \right)\right).
		\label{b_M_gen_iid}
		\end{equation}
	\end{corollary}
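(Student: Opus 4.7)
The plan is to derive this corollary as an immediate specialization of Theorem \ref{thm_kth_max} to the regime in which $\nu_m = \nu$ for every $m$. First I would observe that when all non-centrality parameters coincide, the set $R_\nu$ collapses to the singleton $\{\nu\}$, so $K=1$, $\tilde{\nu}=\nu$, and $\tilde{M}=M$ since every index satisfies $\nu_m=\tilde{\nu}$. The selection condition (\ref{m_condn1}) is then vacuously satisfied, because there is no $\nu_i$ distinct from $\tilde{\nu}$ to verify it against.

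With these identifications in hand, I would substitute $\tilde{\nu}\mapsto\nu$ and $\tilde{M}\mapsto M$ into (\ref{b_M_gen}); the expression matches (\ref{b_M_gen_iid}) term by term, while $a_M=2\sigma^2$ is unchanged. The only remaining task is to evaluate $p$ in this degenerate setting. The sum defining $p$ collapses to a single index $i=1$ with $M_1/\tilde{M}=1$, $\sqrt{\tilde{\nu}/\nu_1}=1$, and $\nu_1-\tilde{\nu}=0$, so the $\sqrt{\log\tilde{M}}$ contribution in the exponent vanishes and the surviving exponentials combine as $\exp(-\nu^2/\sigma^2)\,\exp(\nu^2/\sigma^2)=1$. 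Hence $p=1$, and (\ref{k_th_max_cdf}) reduces exactly to (\ref{k_th_max_cdf_iid}).

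I do not anticipate any substantive obstacle: the corollary is a transparent algebraic reduction of an already-proved theorem. The one point that merits care is confirming that the Gaussian-type exponential factors entering $p$ cancel cleanly when all parameters coincide, since any leftover constant would propagate as a multiplicative shift in the effective sample size appearing inside the upper incomplete gamma function, and would be easy to overlook.
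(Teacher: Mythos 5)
Your proposal is correct and matches the paper's own proof, which likewise obtains the corollary by setting $\nu_m=\nu$ for all $m$ in Theorem \ref{thm_kth_max}; your explicit check that $K=1$, $\tilde{\nu}=\nu$, $\tilde{M}=M$ and that the factors $\exp(-\nu^2/\sigma^2)\exp(\nu^2/\sigma^2)$ cancel to give $p=1$ is exactly the reduction the paper leaves implicit.
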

	\begin{proof}
		This result can be derived by substituting $\nu_m=\nu$ for all $m$ in Theorem \ref{thm_kth_max}.
	\end{proof}
	
	\begin{corollary}\label{corollary_iid2}
		The asymptotic CDF of the normalized maximum of a sequence of non-central chi-square RVs with two dof and non-identical non-centrality parameter is given by
		\begin{equation}
		F_{\tilde{Z}_{(M:M)}}(z) = \exp(-p\exp(-z)),
		\end{equation}
		for the same choice of normalizing constants proposed in Theorem \ref{thm_kth_max}.
	\end{corollary}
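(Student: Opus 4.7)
The plan is to obtain this result as the $k=1$ specialization of Theorem \ref{thm_kth_max}, since the maximum $Z_{(M:M)}$ is exactly the $k$-th maximum with $k=1$. First, I would invoke Theorem \ref{thm_kth_max}: with the choice of normalising constants $a_M=2\sigma^2$ and $b_M$ as in (\ref{b_M_gen}), the normalised $k$-th maximum has asymptotic CDF
\begin{equation}
F_{\tilde{Z}_{(M-k+1:M)}}(z) = \frac{\Gamma(k, p\exp(-z))}{\Gamma(k)},
\end{equation}
with the same constant $p$ as in the theorem statement. Setting $k=1$ yields $F_{\tilde{Z}_{(M:M)}}(z) = \Gamma(1, p\exp(-z))/\Gamma(1)$.

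Next, I would reduce the right-hand side using elementary properties of the upper incomplete gamma function. Since $\Gamma(1)=1$ and $\Gamma(1,x)=\int_x^\infty e^{-t}\,dt = e^{-x}$, the expression collapses to $\exp(-p\exp(-z))$, which is the claimed formula. This is fully consistent with the observation in the paragraph following Theorem \ref{thm_order_stat} that $\tilde{\Phi}_1(z)=\exp(-\tilde{u}(z))$, together with the identification $\tilde{u}(z)=p\exp(-z)$ implicitly established during the proof of Theorem \ref{thm_kth_max} when verifying conditions (\ref{ua1})--(\ref{ua3}). One could equivalently argue directly from Theorem \ref{thm_order_stat}: with the normalising constants of Theorem \ref{thm_kth_max}, conditions (\ref{ua1}) and (\ref{ua2}) are already verified (since those verifications are not $k$-specific), so (\ref{ua3}) with $k=1$ gives the stated exponential of an exponential.

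There is essentially no technical obstacle here; the result is a direct corollary, and the only step worth stating carefully is the $\Gamma(1,x)=e^{-x}$ simplification. Consequently, the proof I would write is a one- or two-line reduction pointing to Theorem \ref{thm_kth_max} with $k=1$ and then invoking the elementary identity for $\Gamma(1,\cdot)$.
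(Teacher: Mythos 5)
Your proof is correct and matches the paper's own argument, which likewise just substitutes $k=1$ into Theorem \ref{thm_kth_max}; your explicit use of $\Gamma(1)=1$ and $\Gamma(1,x)=e^{-x}$ simply spells out the reduction the paper leaves implicit.
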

	\begin{proof}
		This result can be derived by substituting $k=1$ in Theorem \ref{thm_kth_max}.
	\end{proof}
	{The authors of \cite{xue2015performance} also use EVT to study the maximum order statistics of i.i.d. non-central chi-square RVs with two degrees of freedom and prove that the limiting distribution is the Gumbel distribution. However, they derive bounds on the maximum RV without characterizing the exact values of the normalizing constants. This bound was observed to be loose compared to the results proposed here in simulations.} Given that we have characterized the CDF of the normalized $k$-th maximum RV, the distribution of the $k$-th maximum RV for large values of $M$ can be evaluated as
	\begin{equation}
	F_{Z_{(M-k+1:M)}}(z) =  F_{\tilde{Z}_{(M-k+1:M)}}\left(\frac{z-b_M}{a_M} \right).
	\label{unnorm_cdf_1}
	\end{equation}
	One can observe that the above distribution can also be evaluated as,
	\begin{equation}
	F_{Z_{(M-k+1:M)}}(z) = \sum\limits_{m=0}^{k-1} \frac{u^m(z)}{m!} \exp(-u(z)),
	\label{unnorm_cdf_2}
	\end{equation} where $u(z)=\sum\limits_{m=1}^M \mathcal{Q}_1\left( \frac{\nu_m}{\sigma_m}, \frac{\sqrt{z}}{\sigma_m} \right)$\footnote{Note that we can handle the case of non-identical parameters $\{\sigma_m\}_{m=1}^M$ here.}. The simulations demonstrate that (\ref{unnorm_cdf_2}) can be used to evaluate the CDF of the maximum for moderate values of $M \ (>10 )$. However for larger values of $M$, it is more efficient to evaluate the CDF expression proposed in (\ref{unnorm_cdf_1}). Furthermore, the distribution of the $k$-th maximum given by (\ref{unnorm_cdf_1}) is far easier to evaluate when compared to the exact distribution of the $k$-th maximum evaluated using (\ref{cdf_kth_max_exact}) for $F_{Z_m}(z)$ given by (\ref{cdf_zm}).

	\subsection{Stochastic Ordering Results} \label{stochastic_ordering}
	
	Stochastic ordering is a widely used tool for establishing the ordering of one RV with respect to another \cite{shaked2007stochastic}. Such an ordering proves to be useful in scenarios where we want to study the variations of a RV with respect to the variations in its parameters. In this sub-section, similar to the analysis in \cite[Section III.B]{subhash2021cooperative}, we make use of results from stochastic ordering to see how we can establish the ordering of the $k$-th maximum RV with respect to the variations in different parameters. Here, the $k$-th maximum RV $X$ with parameters $a_M^{(1)}$ and $b_M^{(1)}$ is said to be stochastically smaller than the RV $Y$ with parameters $a_M^{(2)}$ and $b_M^{(2)}$ if 
	\begin{align}
	\mathbb{P}(Y>z) & \geq \mathbb{P}(X>z) \ \forall \ z \ \in \mathbb{R}. \\ 
	\text{i.e} \ \Gamma\left(k,p \times \exp\left(\frac{-\left(\frac{z_{th}}{\gamma_s} \right)+b_M^{(2)}}{a_M^{(2)}} \right) \right) & \leq \Gamma\left(k,p \times \exp\left(\frac{-\left(\frac{z_{th}}{\gamma_s} \right)+b_M^{(1)}}{a_M^{(1)}} \right) \right).
	\label{order_condn1}
	\end{align}
	The inequality in (\ref{order_condn1}) can be true only if the argument of the incomplete Gamma function on the L.H.S is larger than the argument of the incomplete gamma function in the R.H.S of the equation. Upon further simplification, the condition in (\ref{order_condn1}) can be equivalently represented as 
	\begin{equation}
	\frac{\tilde{z}-b_M^{(2)}}{a_M^{(2)}} \leq \frac{\tilde{z}-b_M^{(1)}}{a_M^{(1)}},
	\label{order_condn2}
	\end{equation}where $\tilde{z}=\frac{z_{th}}{\gamma_s}$. Given that we have closed form expressions for the normalizing constants, any change in the parameters can now be mapped to the corresponding change in the normalizing constants $\{a_M^{(i)},b_M^{(i)};i\in \{1,2\}\}$ and hence the corresponding ordering can be established. Note that the condition in (\ref{order_condn2}) is simple to evaluate and it is not possible to arrive at such a simple condition if we were using the exact order statistics of the $k$-th maximum RV. 
\subsection{\textcolor{black}{Other metrics of interest}} \label{analysis_spcl}
In this sub-section, we derive the expressions for the asymptotic probability of outage, average throughput, effective throughput, and the average bit error rate when the received SNR is the $k$-th maximum order statistic evaluated over a sequence of i.n.i.d. RVs with CDF (\ref{cdf_zm}). Throughout this section, we assume that the SNR is given by $\gamma_s Z_{(M-k+1:M)}$ and the sequence of RVs over which the maximum is evaluated belongs to either of the special cases discussed in Section \ref{kth_max_iid_derivation}-\ref{kth_max_inid_derivation}.
	
	\subsubsection{Asymptotic outage probability} \label{op_spcl}
	The probability of outage at the receiver for a threshold of $z_{th}$ is given by
	\begin{equation}
	\mathbb{P}(\gamma_s Z_{(M-k+1:M)} \leq z_{th}) = F_{Z_{(M-k+1:M)}}\left(\frac{z_{th}}{\gamma_s}\right)= \frac{\Gamma\left(k,p \times \exp\left(\frac{-\left(\frac{z_{th}}{\gamma_s} \right)+b_M}{a_M} \right) \right)}{\Gamma(k)},
	\end{equation}
	where $a_M$ and $b_M$ can be evaluated using (\ref{norm_const_iid}) and (\ref{norm_const_inid}) for the case of i.i.d. and i.n.i.d. RVs respectively.
	
	\subsubsection{Asymptotic average throughput} \label{avg_spcl}
	
	The average throughput at the receiver is given by,
	\begin{equation}
	\bar{R}_{(M-k+1:M)}=\mathbb{E}[R_{(M-k+1:M)}] = \mathbb{E}\left[ \log_2(1+\gamma) \right] = \mathbb{E}\left[ \log_2(1+\gamma_s Z_{(M-k+1:M)}) \right].
	\label{erg_capacity}
	\end{equation}
	We know that the RV $\tilde{Z}_{(M-k+1:M)}$ converges in distribution to a RV $\tilde{Z}_k$. Now, let $h(\tilde{Z}_{(M-k+1:M)}):= \log_2(1+\gamma_s (a_M \tilde{Z}_{(M-k+1:M)} + b_M) )$. To evaluate $\mathbb{E}[h(\tilde{Z}_{(M-k+1:M)})]$, we utilize the continuous mapping theorem \cite{billingsley2013convergence}. Thus, we have $h(\tilde{Z}_{(M-k+1:M)}) \xrightarrow[\text{}]{\text{D}} h(\tilde{Z}_{k})$.
	Since $h(\tilde{Z}_{(M-k+1:M)})$ is a positive RV, the expectation of this RV can be evaluated as 
	\begin{equation}
	\mathbb{E}[h(\tilde{Z}_{(M-k+1:M)})] = \int\limits_0^\infty \mathbb{P}(h(\tilde{Z}_{(M-k+1:M)})>x) \ \text{dx}.
	\end{equation}
	Thus, we have
	\begin{equation}
	\lim\limits_{M \to \infty} \mathbb{E}[h(\tilde{Z}_{(M-k+1:M)})] = \lim\limits_{M \to \infty} \int\limits_0^\infty \mathbb{P}(h(\tilde{Z}_{(M-k+1:M)})>x) \ \text{dx}.
	\end{equation}
	Next, we make use of monotone convergence theorem \cite{billingsley2008probability} to simplify the above expression and we have,
	\begin{equation}
	\lim\limits_{M \to \infty} \int\limits_0^\infty \mathbb{P}(h(\tilde{Z}_{(M-k+1:M)})>x) \ \text{d}x = \int\limits_0^\infty \lim\limits_{M \to \infty} \mathbb{P}(h(\tilde{Z}_{(M-k+1:M)})>x) \ \text{d}x = \int\limits_0^\infty \mathbb{P}(h(\tilde{Z}_{k})>x) \ \text{d}x.
	\end{equation}
	Thus, we have $\lim\limits_{M \to \infty} \mathbb{E}[h(\tilde{Z}_{(M-k+1:M)})]=\mathbb{E}[h(\tilde{Z}_k)]$. Given that we have proved the convergence of the moments of $h(\tilde{Z}_{(M-k+1:M)})$ to the moments of $h(\tilde{Z}_{(k)})$, in the following theorem, we propose a series expression to evaluate the average throughput. 
	\begin{theorem} \label{rate_series_thm}
		The average throughput can be evaluated as
		\begin{equation}
		R_{(M-k+1:M)} = \frac{p^k}{\Gamma(k)} \sum\limits_{n=0}^\infty \frac{(-1)^n p^n}{n! (k+n)} \exp \left(\frac{(k+n)(1+\gamma_sb_M)}{\gamma_s a_M} \right) \ E_1\left(\frac{k+n}{a_M \gamma_s} \right).
		\label{sum_expr}
		\end{equation}
	\end{theorem}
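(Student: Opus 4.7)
The plan is to exploit the convergence of moments $\lim_{M\to\infty}\mathbb{E}[h(\tilde Z_{(M-k+1:M)})]=\mathbb{E}[h(\tilde Z_k)]$ established just before the theorem, so that the problem reduces to computing the expected rate for the limiting law, whose CDF (via Theorem \ref{thm_kth_max}) is $F_Z(z)=\Gamma(k,pe^{-(z-b_M)/a_M})/\Gamma(k)$. Setting $\gamma=\gamma_s Z$, this gives $F_\gamma(t)=\Gamma(k,pe^{-(t/\gamma_s-b_M)/a_M})/\Gamma(k)$, which I would plug into a tail-integral representation of the expected rate rather than integrating against the density.

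The next step is to write $\bar R=\int_0^\infty \mathbb{P}(\log_2(1+\gamma)>x)\,dx$ and substitute $t=2^x-1$, which transforms the expression into $\bar R \propto \int_0^\infty \frac{1-F_\gamma(t)}{1+t}\,dt$. I would then rewrite $1-\Gamma(k,y)/\Gamma(k)=\gamma(k,y)/\Gamma(k)$ with $y=p\exp\bigl(-(t/\gamma_s-b_M)/a_M\bigr)$ and expand the lower incomplete gamma in its Maclaurin series $\gamma(k,y)=\sum_{n\ge 0}\frac{(-1)^n y^{k+n}}{n!(k+n)}$. Each summand factors cleanly into a $t$-independent constant $p^{k+n}e^{(k+n)b_M/a_M}$ multiplied by a pure exponential $\exp\bigl(-(k+n)t/(\gamma_s a_M)\bigr)$, which is the key structural simplification that makes a closed form available.

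After interchanging sum and integral, what remains is the canonical integral $\int_0^\infty \frac{e^{-ct}}{1+t}\,dt$ with $c=(k+n)/(\gamma_s a_M)$. The substitutions $u=1+t$ and then $v=cu$ transform this into $e^{c}\int_c^\infty v^{-1}e^{-v}\,dv=e^{c}E_1(c)$. Combining the $e^{c}$ from the integral with the $e^{(k+n)b_M/a_M}$ coming out of the series coalesces the exponential prefactor into $\exp\bigl((k+n)(1+\gamma_s b_M)/(\gamma_s a_M)\bigr)$, and factoring $p^k$ out of the summation then reproduces the expression (\ref{sum_expr}).

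The principal technical obstacle is justifying the exchange of summation and integration, because the alternating Maclaurin series for $\gamma(k,y)$ is only conditionally convergent when $y$ is large, i.e.\ when $t$ is small relative to $\gamma_s b_M$. My plan is to apply dominated convergence to the partial sums $S_N(t)$ of $\gamma(k,y(t))$: the closed-form bound $|\gamma(k,y)|\le \Gamma(k)$ provides a uniform envelope for the limit, while the eventual monotone decay of the term magnitudes (which kicks in once $n$ exceeds roughly $y(t)$) lets me pair consecutive terms and bound the partial sums by a fixed multiple of $\Gamma(k)$, after which the integrable majorant $\Gamma(k)/(1+t)$ closes the argument. Once that interchange is secured, the remainder of the derivation is a routine change-of-variables computation.
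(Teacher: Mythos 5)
Your main computation is correct, but it follows a genuinely different route from the paper's proof. The paper works directly with the limiting density $f_{Z_{(M-k+1:M)}}(z)=\frac{p^k}{a_M\Gamma(k)}\exp\left(\frac{-k(z-b_M)}{a_M}\right)\exp\left(-p\,e^{-(z-b_M)/a_M}\right)$, substitutes $y=(z-b_M)/a_M$, expands the double-exponential factor $\exp(-p\,e^{-y})=\sum_{n\ge0}\frac{(-1)^np^ne^{-ny}}{n!}$, justifies the sum--integral interchange with a dominated-convergence plus ratio-test argument, and reduces each term to $\int_1^\infty\log_2(x)\,e^{-(k+n)x/(\gamma_sa_M)}\,dx$, recognized as $\Gamma(0,\cdot)=E_1(\cdot)$. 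You instead use the tail formula for the mean, expand the lower incomplete gamma function of the CCDF as $\sum_{n\ge0}\frac{(-1)^ny^{k+n}}{n!(k+n)}$, and invoke $\int_0^\infty\frac{e^{-ct}}{1+t}\,dt=e^cE_1(c)$. The two expansions generate term-for-term the same series (the factor $\frac{1}{k+n}$ that you obtain from the Maclaurin coefficient the paper obtains from the $t$-integral), and your route makes the appearance of $e^cE_1(c)$ arguably more transparent. Both arguments share the same implicit truncation of the Gumbel-type limit to $Z\ge0$, and the same $\ln 2$ bookkeeping: the factor $1/\ln 2$ you hide in ``$\propto$'' is exactly the factor the paper silently drops when evaluating $\int_1^\infty\log_2(x)e^{-cx}dx$, so tracked carefully both derivations give (\ref{sum_expr}) up to that bits-versus-nats constant.

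The one step that does not work as written is your justification of the interchange. First, the conditional-convergence worry is misplaced: because of the $n!$ in the denominator, the series for the lower incomplete gamma converges absolutely for every fixed argument $y$, however large. Second, and more substantively, your proposed dominating function $\Gamma(k)/(1+t)$ is not integrable on $(0,\infty)$ (the integral diverges logarithmically), so dominated convergence with that envelope does not close the argument. The repair is immediate: bound every partial sum by the absolute series, $\sum_{n\ge0}\frac{y(t)^{k+n}}{n!(k+n)}\le y(t)^k e^{y(t)}$, and note that $y(t)=p\,e^{b_M/a_M}e^{-t/(\gamma_sa_M)}$ decays exponentially in $t$, so $y(t)^ke^{y(t)}/(1+t)$ is an integrable majorant; equivalently, since $\sum_{n\ge0}\int_0^\infty\frac{y(t)^{k+n}}{n!(k+n)(1+t)}\,dt<\infty$, Tonelli--Fubini licenses the interchange directly, with no pairing of alternating terms needed. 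With that substitution your proof is complete and of comparable rigor to the paper's Lebesgue/ratio-test argument.
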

	\begin{proof}
		Please refer Appendix \ref{proof_rate} for the proof.
	\end{proof}
	Here, $E_1(.)$ is the exponential integral function \cite{exp_integral_1}. Now, the average throughput can be evaluated using the above infinite summation truncated to a finite number of terms\footnote{Note that for large values of the normalizing constant $b_M$, evaluation of this series expression is observed to have inaccuracies due to the numerical issues while handling large numbers. In such scenarios, we can evaluate the average throughput by directly evaluating (\ref{intg_expr}). Using the numerical integration routines available in software like Matlab and Mathematica, (\ref{intg_expr}) can be evaluated within fractions of a second. Nevertheless, the series expression is useful for deriving inferences about the system performance with respect to the variations in different parameters.}. Note that the expression for average throughput in (\ref{sum_expr}) is more amenable for analysis when compared to the expression in (\ref{intg_expr}). For instance, closed-form expressions for the variables like $p,b_M$ and $a_M$ can be used to infer their variations with respect to the changes in the channel fading parameters. This, in turn, can be used for predicting the corresponding variations in the average throughput by analyzing the corresponding changes in (\ref{sum_expr}). Furthermore, we can use the following lemma from \cite{shaked2007stochastic}, to extend the ordering results in section \ref{stochastic_ordering} to the ordering of the average throughput $\bar{R}_{(M-k+1:M)}$.
	
	\begin{lemma} \label{sto_order}
		RV $X$ is stochastically less than or equal to RV $Y$ if and only if the following holds for all increasing functions $\phi(.)$ for which the expectations exist: $\mathbb{E}[\phi(X)] \leq \mathbb{E}[\phi(Y)]$.
	\end{lemma}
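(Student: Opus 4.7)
The plan is to prove the equivalence by establishing both implications separately. The reverse direction is immediate by specialising $\phi$ to an indicator, while the forward direction relies on a quantile coupling construction (essentially Strassen's theorem) that realises $X$ and $Y$ on a common probability space so that one lies pointwise below the other; once such a coupling is in place, both parts of the lemma collapse to a one-line monotonicity argument.

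For the $(\Leftarrow)$ direction I would assume $\mathbb{E}[\phi(X)] \leq \mathbb{E}[\phi(Y)]$ for every non-decreasing $\phi$ whose expectation exists, fix $z \in \mathbb{R}$, and apply the hypothesis to $\phi_z(x) = \mathbb{I}\{x > z\}$, which is bounded and non-decreasing. This immediately yields $\mathbb{P}(X > z) = \mathbb{E}[\phi_z(X)] \leq \mathbb{E}[\phi_z(Y)] = \mathbb{P}(Y > z)$ for every $z$, which is exactly the definition of stochastic ordering used in Section \ref{stochastic_ordering}. For the $(\Rightarrow)$ direction I would start from $F_X(z) \geq F_Y(z)$ for all $z$ and build a quantile coupling: let $U$ be Uniform$(0,1)$ on an auxiliary probability space, and set $X' = F_X^{-1}(U)$ and $Y' = F_Y^{-1}(U)$, where $F^{-1}(u) := \inf\{t : F(t) \geq u\}$ denotes the usual generalised inverse. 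The probability integral transform gives $X' \stackrel{d}{=} X$ and $Y' \stackrel{d}{=} Y$. The pointwise inequality $F_X \geq F_Y$ makes the infimum set in the definition of $F_X^{-1}(u)$ a superset of the one in $F_Y^{-1}(u)$, so $F_X^{-1}(u) \leq F_Y^{-1}(u)$ for every $u \in (0,1)$ and hence $X' \leq Y'$ almost surely. For any non-decreasing $\phi$, monotonicity then gives $\phi(X') \leq \phi(Y')$ almost surely, and taking expectations yields $\mathbb{E}[\phi(X)] = \mathbb{E}[\phi(X')] \leq \mathbb{E}[\phi(Y')] = \mathbb{E}[\phi(Y)]$ whenever the expectations are defined.

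The main obstacle I anticipate is handling the generalised inverse carefully when $F_X$ or $F_Y$ has jump discontinuities or flat stretches: strict monotonicity cannot be assumed, so the implication $F_X \geq F_Y \Rightarrow F_X^{-1} \leq F_Y^{-1}$ must be justified from the definition of the infimum rather than by direct inversion, and the probability integral transform itself must be invoked in its general (rather than continuous) form. Beyond that subtlety the argument is essentially algebraic; no convergence theorem is required, since the coupling produces an almost-sure pointwise inequality for $\phi$ that survives integration as long as each expectation exists in $[-\infty, \infty]$, which is precisely the qualification imposed in the statement of the lemma.
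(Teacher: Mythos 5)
Your proof is correct. Note that the paper does not prove this lemma at all---it is simply quoted from \cite{shaked2007stochastic}---and your argument (indicator test functions $\mathbb{I}\{x>z\}$ for the easy direction, and the generalised-inverse quantile coupling $X'=F_X^{-1}(U)\le F_Y^{-1}(U)=Y'$ for the converse, with the monotonicity of the infimum handling jumps and flat stretches) is precisely the standard proof given in that reference, so there is nothing to add.
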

	This aids us in deriving inferences regarding the changes in the average throughput with respect to variations in the system parameters. Note that such observations are otherwise difficult to be derived directly from the exact expressions for the average throughput.
	
	\color{black}
	\subsubsection{Effective Throughput} \label{eff_spcl}
	Under the k-th best link selection scheme, the effective throughput that can be supported under a statistical QoS constraint described by the delay QoS exponent $\zeta$ is given by $\bar{\alpha}_{(M-k+1:M)}$ \cite{al2018asymptotic,wu2003effective},
	\begin{equation}
	\bar{\alpha}_{(M-k+1:M)}=-\frac{1}{\theta} \log _{2}\left(\mathbb{E}\left[e^{-\theta \ln (2) R_{(M-k+1:M)}}\right]\right),
	\label{effec_throughput}
	\end{equation} where $\theta=\zeta T$ and $T$ is the transmission block length. Here the expectation in (\ref{effec_throughput}) is evaluated over the distribution of $Z_{(M-k+1:M)}$ and is given by,
	\begin{align}
	\mathbb{E}\left[e^{-\theta \ln (2) R_{(M-k+1:M)}}\right] & = \mathbb{E} \left[\exp \left(-\theta \ln 2 \log_2(1+\gamma_s Z_{(M-k+1:M)}) \right) \right] \\ &  = \mathbb{E} \left[\exp \left(-\theta \log(1+\gamma_s Z_{(M-k+1:M)})\right) \right]\\
	& = \mathbb{E} \left[(1+\gamma_s Z_{(M-k+1:M)})^{-\theta} \right] \approx \mathbb{E}\left[(\gamma_s Z_{(M-k+1:M)})^{-\theta} \right],
	\label{effec_throughput_integral}
	\end{align} where the last approximation holds for large values of $Z_{(M-k+1:M)})$ (This will in fact be a lower bound on the achievable throughput.). The expectation in (\ref{effec_throughput_integral}) can be evaluated using numerical integration methods. 
	
	\subsubsection{Average BEP} \label{avg_ber_spcl} 
	
	In this section, we consider the class of modulation schemes whose average conditional BEP can be expressed 
	\begin{equation}
	\bar{P}_{e}=C \  \mathbb{E}[\exp({-\rho \gamma})],
	\label{avg_bep_gen}
	\end{equation}
	where $C$, $\rho$ are non-negative and constant for a particular modulation scheme and $\gamma$ is the SNR at the receiver. Thus, we have
	\begin{equation}
	\bar{P}_{(M-k+1:M)}=C \  \mathbb{E}[\exp({-\rho \gamma_s Z_{(M-k+1:M)}})] = C \mathcal{M}_{Z_{(M-k+1:M)}}\left(-\rho\gamma_s \right),
	\label{avg_bep}
	\end{equation}   
	which $\mathcal{M}_{Y}(t)$ is the moment generating function (MGF) of the RV $Y$ evaluated at $t$. Similar to the analysis in \cite{al2018asymptotic}, using Theorem 2 from \cite{chareka2008converse} we can ensure the convergence of the MGF of the RV $\tilde{Z}_{(M-k+1:M)}$ to the MGF of the RV with CDF given in (\ref{k_th_max_exprsn}). For all the special cases mentioned in Section \ref{sec_kth_max}, the MGF of the RV $\tilde{Z}_{(M-k+1:M)} = \frac{{Z}_{(M-k+1:M)}-b_M}{a_M}$ is given by $\mathcal{M}_{\tilde{Z}_{(M-k+1:M)}}(t)=\frac{\Gamma(k-t)}{\Gamma(k)}$. Now, using the relation $\mathcal{M}_{\alpha X+\beta}(t)=\exp(\beta t) \mathcal{M}(\alpha t)$, we have 
	\begin{equation}
	\bar{P}_{(M-k+1:M)} = C \exp(-b_M \rho \gamma_s) \mathcal{M}_{\tilde{Z}_{(M-k+1:M)}}(-a_M \rho \gamma_s) = C  \exp(-b_M \rho \gamma_s)  \frac{\Gamma(k+a_M\rho \gamma_s)}{\Gamma(k)}.
	\end{equation}
	\par If not for the simple expression proposed for the CDF of the $k$-th maximum RV, the evaluation of the performance metrics discussed in Sections \ref{op_spcl}-\ref{avg_ber_spcl} would not have been accessible using the exact order statistics. Using the exact order statistics, we will end up with complicated expressions that does not admit an easy analysis and costly to evaluate.

	\subsection{Analysis for the Other Cases} \label{analysis_all}
	
	Now, for all the cases that does not belong to either of the special cases discussed in the Section \ref{sec_2b}, we can use the CDF of $Z_{(M-k+1:M)}$ given in (\ref{unnorm_cdf_2}) for evaluating the different performance metrics. The corresponding probability density function is obtained by evaluating the derivative of (\ref{unnorm_cdf_2}) and is given by, 
	\begin{equation}
	f_{Z_{(M+k-1:M)}}(z) = u'(z) \left \lbrace \frac{\Gamma(k-1,u(z))}{\Gamma(k-1)}- \frac{\Gamma(k,u(z))}{\Gamma(k)}\right \rbrace,
	\end{equation}where $u'(z)=\frac{du(z)}{dz}$. Even though we cannot have simple closed form expressions for the different performance metrics as derived in the previous sub-section, the above distribution function can be easily used for evaluating the corresponding numerical integral expressions. Moreover, this expression is more useful than the exact distribution of the $k$-th maximum SNR for deriving inferences regarding the system performance under variations in the system parameters.

	\section{Applications} \label{sys_model}
	
	Note that the proposed asymptotic distribution can be utilized in any application where one needs the order statistics of a sequence of non-central chi-square RVs with CDF as given in (\ref{cdf_zm}). This section presents three different applications of the proposed results in communication theory. Apart from the applications presented here, the proposed results can also be utilized to generalize the results presented in \cite{xue2015performance,song2006asymptotic,al2018asymptotic} where the order statistics of the $k$-th maximum channel gain is used for performance analysis. 
	
	\subsubsection{Strongest eavesdropper characterization in IoT systems}
	
	In this sub-section, we consider a UAV assisted IoT system similar to the system model considered by the authors of \cite{lei2019safeguarding}. Here we have a single antenna IoT device $S$ communicating confidential information to a single antenna UAV in the presence of $M$ eavesdroppers represented as $\{E_m\}$. Here we assume that all the terrestrial links experience Rician fading such that they receive non-identical powers along the direct LoS paths and identical power along the scattered components. Hence, the SNR $\gamma_{E,m}$ of the channel between the device $S$ and the eavesdropper $E_m$ is given as follows:
	\begin{equation}
	\gamma_{E_m} = \gamma_s|g_m|^2,
	\end{equation}
	where $\gamma_s$ is the ratio of transmit power and the noise power at the receiver and $g_m$ is the complex channel coefficient of the link between $S$ and the $m$-th eavesdropper. We assume that $\gamma_{E,m}$ follows the non-central chi square distribution with CDF as given in (\ref{cdf_zm})\footnote{{The authors of \cite{lei2019safeguarding} assume that the links between $S$ and each $E_m$ experiences i.i.d. small scale fading but different path losses owing to the difference in distances of the links. However, our model assumes that each $E_m$ experiences similar path losses but non-identical small scale fading environments. This can happen in a scenario where all the eavesdroppers are present at almost the same distances around the source node, but owing to the difference in the scattering environment in different directions experience i.n.i.d. small scale fading channel gains.}}. Now, in the presence of multiple eavesdroppers, the secrecy is always defined with respect to the strongest eavesdropper \cite{alsadi2018improving,alotaibi2014secrecy,kampeas2016secrecy,lei2019safeguarding}. Hence, we are interested in characterizing the statistics of the maximum eavesdropper SNR defined as 
	\begin{equation}
	\gamma_{E,M:M} = \max \{\gamma_{E,1},\cdots, \gamma_{E,M}\}. 
	\label{max_e_snr}
	\end{equation}
	Note that the distribution of $\gamma_{E,M:M}$ can be characterized using the results in Corollary \ref{corollary_iid2}. This would, in turn, facilitate the easy characterization of the secrecy performance of the system. 
	
	\subsubsection{UAV selection} 
	In this section, we consider another UAV assisted IoT communication network. Here, there are $M$ UAVs represented as $\{U_m\}_{m=1}^M$, available in the vicinity of an IoT device $I$ which transmits certain status update sporadically. All the nodes are assumed to be equipped with a single antenna for communication. Let $g_m$ represent the Rician fading channel coefficient between the IoT device and the $m$-th UAV. 
	For each transmission, the IoT device connects to that UAV for which the channel between the nodes $I$ and $U_m$ has the $k$-th largest SNR, i.e the IoT device selects the UAV $\hat{m}$ such that $\gamma_{\hat{m}}=\gamma_{M-k+1:M}$. 
	Here, $\gamma_m=\gamma_s |g_m|^2$ is the received SNR, and $\gamma_s$ is the ratio of transmit power and the noise power at the $m$-th UAV. 
	The IoT device is assumed to have a direct LoS channel with each of the UAVs, and hence all the ground to air links are modeled as Rician faded. Thus, the SNR will follow the non-central chi-square distribution (with CDF as given in (\ref{cdf_zm})) and we are interested in the $k$-th maximum statistics of the SNR evaluated over the $M$ UAV links to characterize the performance. Similar to models used by the authors of \cite{lei2019safeguarding,srinivasan2019analysis}, we assume that the Rician factor $K_m$ of each of the ground to air links is a function of the elevation angle. When the channel between the IoT source and the UAV experiences Rician fading with parameters $K_m$ and $\Omega_m$, the SNR of this link will follow non-central chi-square fading. The Rician fading channel parameters are related to the parameters of the non-central chi-square distribution as follows:
	
	\begin{align}
	\nu_m^{2}&=\frac{K_m\Omega_m}{1+K_m}  \quad  \text{and} \\
	\sigma_m^{2}&=\frac{\Omega_m}{2(1+K_m)}.
	\end{align}
	
	Note that for non-identical values of $K_m$, both the parameters $\nu_m$ and $\sigma_m$ will be non-identical for all the users unlike the case considered in (\ref{cdf_zm}) where only parameter $\nu_m$ is non-identical across the users. To proceed further with the analysis available in Section \ref{sec_kth_max}, we assume that the ratio $\frac{1+K_m}{\Omega_m}:=\theta$ is a constant for all the users\footnote{{Recall that this corresponds to the scenario where all the users receive non-identical power along the dominant LoS path and the total power received along all the scattered paths are identical. This is especially the case in scenarios where the power along the scattered components are negligible.}}. With this assumption, we can arrive at the limiting distribution of the $k$-th maximum SNR following steps very similar to the derivation in Appendix \ref{proof_main}. Here also we can proceed by choosing $a_M=\frac{1}{\theta}$ and $b_M$ as given in (\ref{b_M_gen}). Thus, we can now characterize the performance of the system in terms of the simple expressions proposed for the asymptotic statistics of the $k$-th maximum RV. 
	
	\color{black}
	\subsubsection{Selection combining receiver}
	In this section, we demonstrate the utility of the results derived for the performance analysis of a $k$-th best selection combining receiver in a Rician fading environment. Here we consider a single antenna transmitter node and a single antenna receiver node, each capable of half-duplex communication. There exist $M$ independent communication links between the transmitter and the receiver, and the receiver makes use of the signal along only one of these links for information processing\footnote{These $M$ links may correspond to different relay links or different antennas or any other architecture making use of selection diversity.}. Here, we study the case where the receiver selects the $k$-th best (out of $M$) link in terms of the SNR at the receiver for information processing. Let $g_m$ represent the channel gain across the $m$-th transmitter to receiver link. Furthermore, we assume that each of the $M$ links experiences independent Rician fading such that the CDF of the channel power is given by (\ref{cdf_zm}). Let $P$ be the transmit power and $\delta$ be the noise power at the receiver. Let, $\gamma_s=\frac{P}{\delta}$, then the SNR at the receiver can be expressed as 
	\begin{equation}
	\gamma = \gamma_s |g_{(M-k+1:M)}|^2.
	\end{equation}
	Again, we make use of the results derived in Section \ref{sec_kth_max}  to study the asymptotic probability of outage, average throughput, effective throughput, and the average BEP at the receiver node.


	\section{Simulation Results} \label{simulations}
	
	\begin{figure}[ht]
		\begin{minipage}[b]{0.45\linewidth}
			\centering
			\includegraphics[scale=0.45]{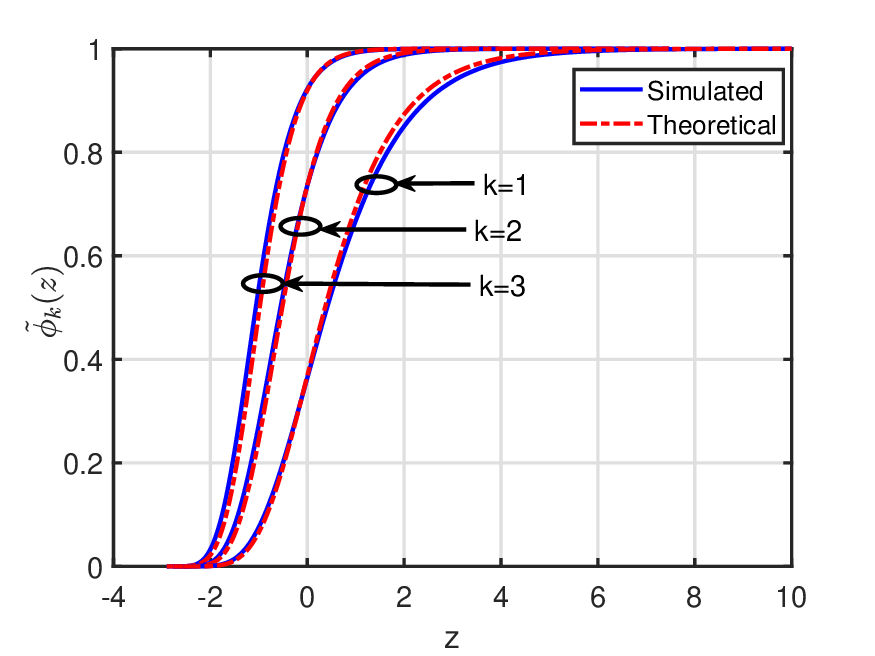}
			\caption{Asymptotic CDF of $\tilde{Z}_{(M-k+1:M)}$ for $M=20$.}
			\label{k_th_max_iid_norm}
		\end{minipage}
		\hspace{0.5cm}
		\begin{minipage}[b]{0.45\linewidth}
			\centering
			\includegraphics[scale=0.45]{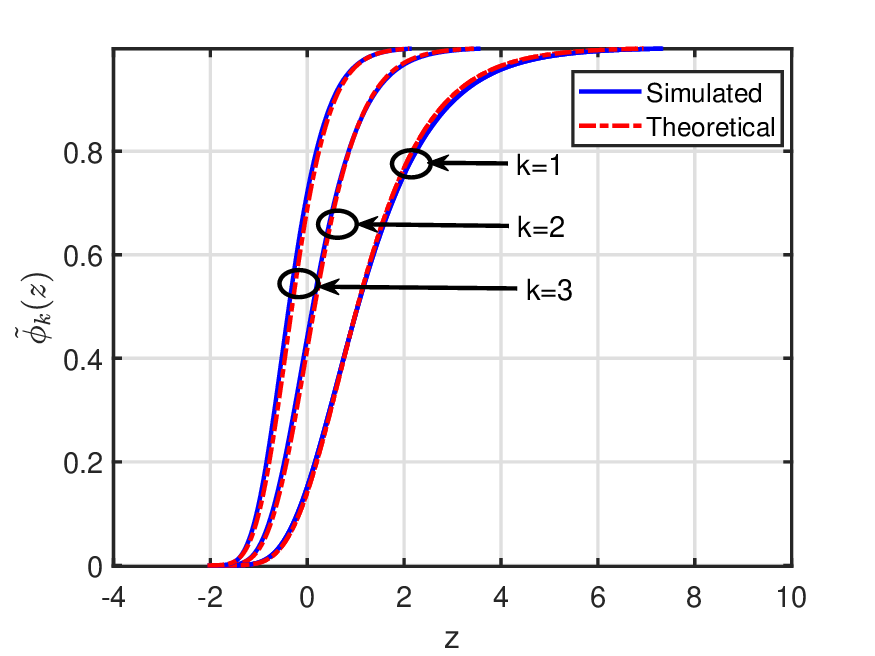}
			\caption{Asymptotic CDF of $\tilde{Z}_{(M-k+1:M)}$ for $M=20$.}
			\label{k_th_max_inid_norm}
		\end{minipage}
	\end{figure}
	
	In this section, we verify the results derived in Section \ref{sec_kth_max} using simulation experiments. Figure \ref{k_th_max_iid_norm} and figure \ref{k_th_max_inid_norm} compares the simulated and theoretical curves of the asymptotic CDF of $\tilde{Z}_{(M-k+1:M)}$ computed using the results derived in sections \ref{kth_max_iid_derivation} and \ref{kth_max_inid_derivation} respectively. Here, we have chosen $\sigma_m=2$ and $M=20$ for all the simulations, unless mentioned otherwise.  Figure \ref{k_th_max_iid_norm} demonstrates the case of i.i.d. RVs with $\nu_m=1$ $\forall \ m$ and figure \ref{k_th_max_inid_norm} demonstrates the case of i.n.i.d. RVs with $\nu_m=1$ for $1\leq m \leq \frac{M}{2}$ and $\nu_m=0.5$ for $\frac{M}{2} < m \leq M$.

	\begin{figure}[ht]
		\begin{minipage}[b]{0.45\linewidth}
			\centering
			\includegraphics[scale=0.45]{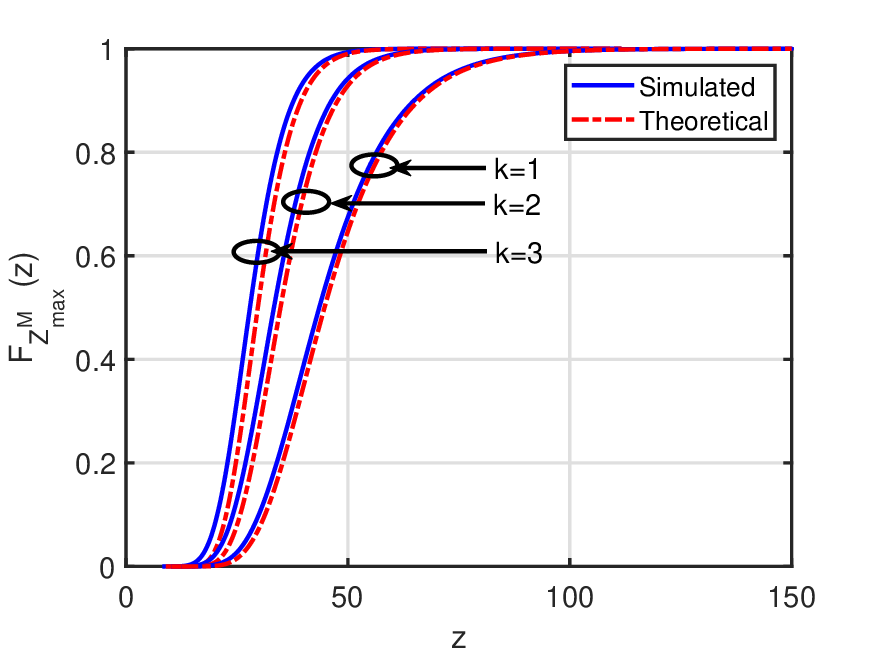}
			\caption{Asymptotic CDF of ${Z}_{(M-k+1:M)}$ for $M=30$.}
			\label{k_th_max_inid_unnorm}
		\end{minipage}
		\hspace{0.5cm}
		\begin{minipage}[b]{0.45\linewidth}
			\centering
			\includegraphics[scale=0.45]{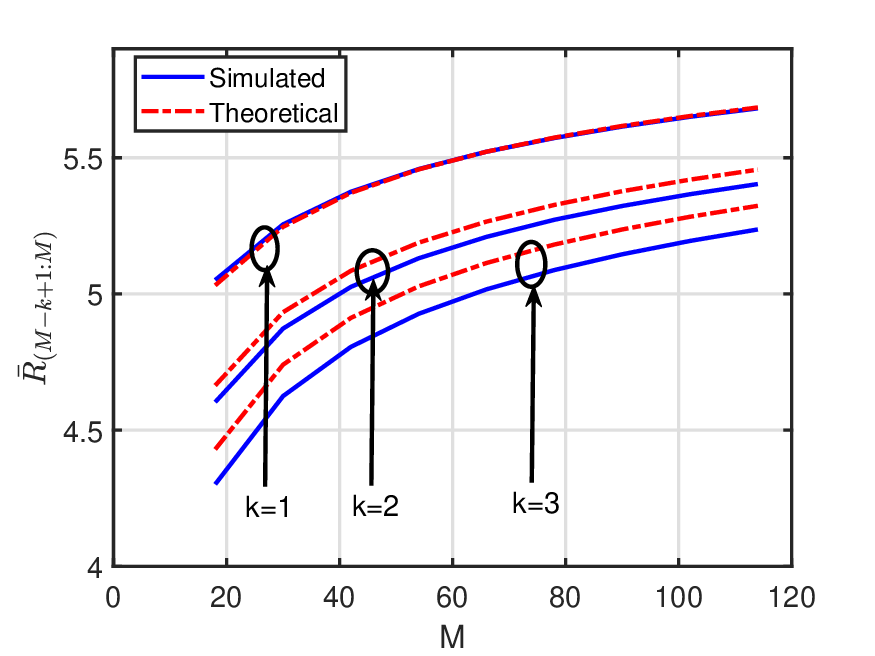}
			\caption{Average throughput for different values of $M$ and $k$.}
			\label{k_th_max_inid_unnorm_rate}
		\end{minipage}
	\end{figure}
	
	Next, in figure \ref{k_th_max_inid_unnorm}, we compare the simulated and theoretical CDF of $Z_{(M-k+1:M)}$ for $M=30$, $\nu_m=3; 1 \leq m \leq \frac{M}{3}$, $\nu_m=1; \frac{M}{3} < m \leq \frac{2M}{3}$ and $\nu_m=0.5;  \frac{2M}{3} < m \leq M$. In figure \ref{k_th_max_inid_unnorm_rate}, we compare the simulated and theoretical values of the average throughput $\bar{R}_{(M-k+1:M)}$ for different values of $M$ and $k$. Here, we use $\nu_m=2; 1 \leq m \leq \frac{M}{3}$, $\nu_m=1; \frac{M}{3} < m \leq \frac{2M}{3}$ and $\nu_m=0.5;  \frac{2M}{3} < m \leq M$. One can observe here that the convergence of the true values of the average throughput to the proposed values deteriorates as the value of $k$ increases. Furthermore, this convergence improves as the value of $M$ increases.  
	
	\begin{figure}[ht]
		\begin{minipage}[b]{0.45\linewidth}
			\centering
			\includegraphics[scale=0.45]{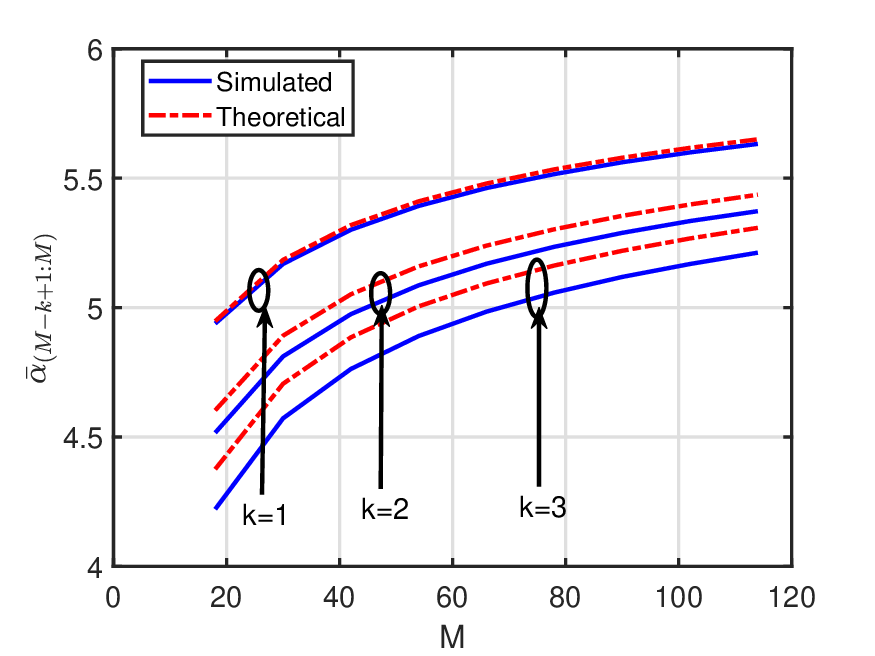}
			\caption{Achievable throughput for different values of $M$ and $k$.}
			\label{k_th_max_inid_unnorm_achvb_thrput}
		\end{minipage}
		\hspace{0.5cm}
		\begin{minipage}[b]{0.45\linewidth}
			\centering
			\includegraphics[scale=0.45]{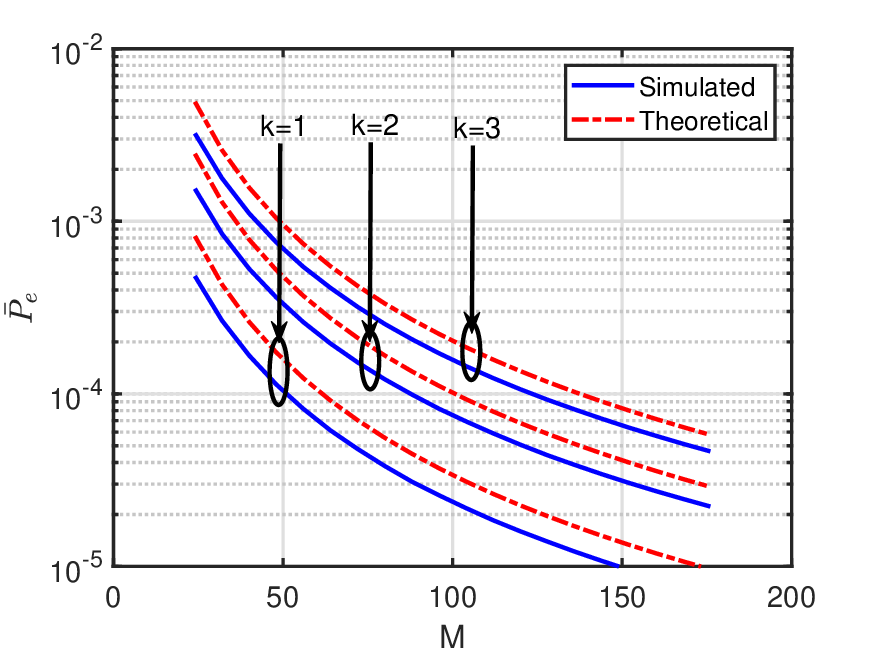}
			\caption{Asymptotic BER for different values of $M$ and $k$.}
			\label{k_th_max_inid_ber}
		\end{minipage}
	\end{figure}
	Figures \ref{k_th_max_inid_unnorm_achvb_thrput} and \ref{k_th_max_inid_ber} respectively compares the simulated and theoretical values of $\bar{\alpha}_{(M-k+1:M)}$ and $\bar{P}_{(M-k+1:M)}$ for different values of $M$ and $k$. Figure \ref{k_th_max_inid_unnorm_achvb_thrput} is generated using the same simulation set up for figure \ref{k_th_max_inid_unnorm_rate} and $\theta=1$. For figure \ref{k_th_max_inid_ber}, we generated SNR RVs with $\nu_m=1; 1 \leq m < \frac{3M}{4}$, $\nu_m=0.5; \frac{3M}{4} < m \leq M$, $C=0.25$ and $\rho=0.25$.
	
	\par {Next, we demonstrate one set of simulations to verify the utility of the asymptotic results for the applications proposed. We consider the system model in Application 1 for different number of eavesdroppers. Here we assume that $\sigma_s=-2$ dB, $\sigma=0.4$ and $\nu_m=0.07; 1 \leq m \leq \frac{M}{2}$, $\nu_m=0.05; \frac{M}{2} < m \leq \frac{3M}{4}$ and $\nu_m=0.01;  \frac{3M}{4} < m \leq M$. The CDF of the simulated values of $\gamma_{E,M:M}$ and the proposed asymptotic CDF are shown in Fig \ref{application_1_fig2}. The results show that the proposed asymptotic CDF characterises the exact CDF of the maximum very closely, especially for large values of $M$.}
	\begin{figure}[ht]
		\centering
		\includegraphics[scale=0.45]{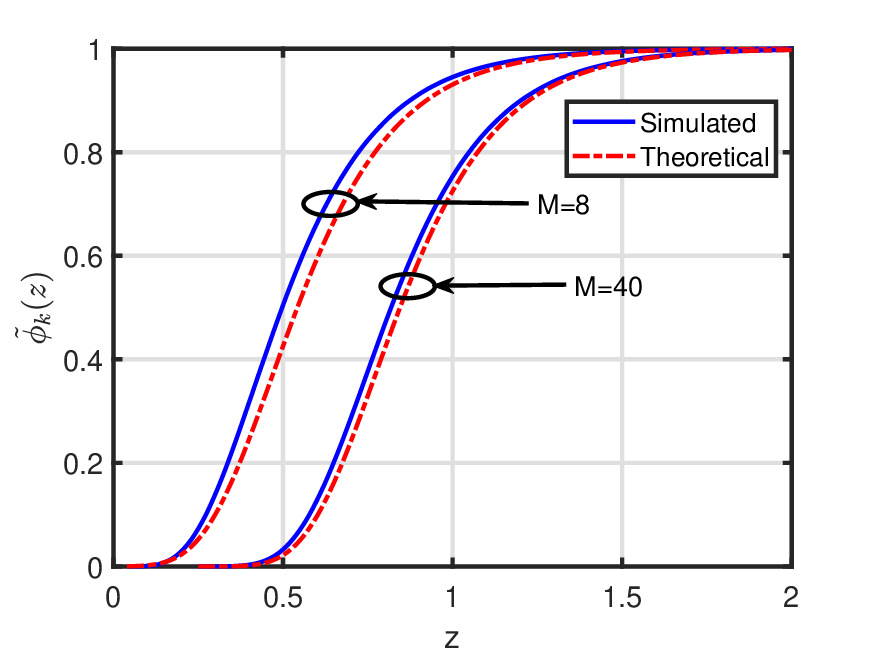}
		\caption{Asymptotic CDF of $\gamma_{E,M:M}$ for different values of $M$.}
		\label{application_1_fig2}
	\end{figure}
	
	\color{black}
	\section{Conclusions}\label{conclusion}
	In this work, we first characterized the distribution of the normalized $k$-th maximum of a sequence of i.n.i.d. RVs following the non-central chi-square distribution with two dof. Furthermore, we derived simple expressions for the corresponding values of asymptotic outage probability, average throughput, achievable throughput, and the average BEP. Our simulations showed that the proposed distribution accurately characterizes the exact distribution of maximum SNR, even for moderate values of $M$. We also discuss the utility of the proposed asymptotic results in simplifying the performance analysis of two IoT systems supported by UAV nodes. The proposed results is also demonstrated to be useful for studying an SC receiver. Moreover, the proposed results are general and can be used for any application involving performance analysis and resource allocation of systems where the order statistics of i.n.i.d. non-central chi-square RVs are involved. 
	\color{black}
	
	\begin{appendices}
		
		\section{Proof for Theorem \ref{thm_kth_max}} \label{proof_main}
		From Theorem \ref{thm_order_stat}, if we can identify normalising constants $a_M$ and $b_M$ such that the conditions in (\ref{ua1}), (\ref{ua2}) and (\ref{ua3}) are satisfied for $F_m(z)=F_{Z_m}(z)$, then we can characterise the distribution of the normalised $k$-th maximum $\tilde{Z}_{M-k+1:M}$. Mejzlers theorem \cite[Chapter 5]{de2007extreme} gives specific conditions on the normalising constants $a_M$ and $b_M$ such that the uniformity assumptions (\ref{ua1}) and (\ref{ua2}) are satisfied. Using these results, we assume that there exist sequences $a_M$ and $b_M$ such that
		\begin{align}
		\mid \log a_M \mid + \mid b_M \mid \ \to \infty \ \text{as} \ M \to \infty
		\label{norm_contn1}
		\end{align}
		and
		\begin{equation}
		\begin{array}{l}
		\frac{a_{M+1}}{a_{M}} \rightarrow 1 \\
		\frac{\left(b_{M+1}-b_{M}\right)}{a_{M}} \rightarrow 0,
		\end{array}
		\label{norm_contn2}
		\end{equation}
		are true. Next for $a_M$ and $b_M$ satisfying (\ref{norm_contn1}) and (\ref{norm_contn2}), we evaluate the limit in (\ref{ua2}) and hence try to identify closed form expressions for $a_M$ and $b_M$. In the following subsections, we present the steps to identify $a_M$ and $b_M$ for different special cases.
		
		\subsubsection{Identical Channel Parameters} \label{kth_max_iid_derivation}
		We begin with the most simple case where the fading channel parameters are identical across all the $m$ links where $m \in \{1,\cdots,M\}$. Hence, we have $\nu_m=\nu$ $\forall \ m$. Thus, (\ref{ua2}) for this case can be evaluated as
		\begin{equation}
		\tilde{u}(z) = \lim\limits_{M \to \infty} M \mathcal{Q}_1 \left(\frac{\nu}{\sigma}, \frac{\sqrt{a_M z+b_M}}{\sigma} \right) .
		\end{equation}
		We further make the assumption that $b_M \to \infty$ as $M \to \infty$. Hence the second argument of the marcum $Q$ function also grows to infinity as $M \to \infty$. Note that $\mathcal{Q}_{1}(\alpha, \beta)$ is related to the Gaussian $Q$ function as $\beta \rightarrow \infty, $  as given below \cite[(A 27)]{nuttall}:
		\begin{equation}
		\mathcal{Q}_1({\alpha,\beta}) \approx \sqrt{\frac{\beta}{\alpha}} \  Q(\beta-\alpha) \overset{(a)}{=}\sqrt{\frac{\beta}{\alpha}} \frac{1}{2} \operatorname{erfc}\left(\frac{\beta-\alpha}{\sqrt{2}}\right),
		\label{marcum_q1}
		\end{equation}
		where the equality in (a) is obtained by expressing the $Q(.)$ function in terms of the complementary error function $erfc(.)$ \cite[Eq.12.26]{molisch2012wireless}. The complementary error function has the following asymptotic expansion \cite{erfc}: 
		\begin{equation}
		\operatorname{erfc}(x) = \frac{1}{\sqrt{\pi}x} \exp({-x^{2}})\left(1+\mathcal{O}\left(\frac{1}{x^{2}}\right)\right)  \ |x| \rightarrow \infty,
		\end{equation}
		using which the asymptotic expansion for the $Q$ function can be written as 
		\begin{equation}
		Q(x) = \frac{x^{-1}}{\sqrt{2 \pi}} \exp\left(\frac{-x^2}{2} \right).
		\label{q}
		\end{equation}
		Using (\ref{q}), (\ref{marcum_q1}) can be written as 
		\begin{equation}
		\mathcal{Q}_{1}(\alpha,\beta) = \sqrt{\frac{\beta}{\alpha}}  \frac{(\beta-\alpha)^{-1}}{\sqrt{2 \pi}} \exp\left(\frac{-(\beta-\alpha)^2}{2} \right) \overset{(b)}{\approx}  \frac{1}{\sqrt{2 \pi \alpha \beta}}\exp\left(\frac{-(\beta-\alpha)^2}{2} \right).
		\label{marcum_q_approx}
		\end{equation} where the approximation in (b) is valid for large $\beta$. Now to evaluate $\tilde{u}(z)$, we have
		\begin{align}
		\alpha = \frac{\nu}{\sigma} \ \text{and} \ 
		\beta  = \frac{\sqrt{a_Mz +b_M}}{\sigma}.
		\end{align}
		Thus, we have
		\begin{align}
		(2 \pi \alpha \beta)^{\frac{-1}{2}}  = \left(2 \pi \frac{\nu}{\sigma} \frac{\sqrt{a_Mz+b_M}}{\sigma} \right)^{\frac{-1}{2}}=
		\left(\frac{2 \pi \nu}{\sigma^2} \right)^{\frac{-1}{2}} \exp \left(\frac{-1}{4} \log \left(a_M z+b_M \right) \right).
		\end{align}
		Similarly, we have
		\begin{align}
		\exp\left(\frac{-(\beta-\alpha)^2}{2}\right)  = \exp \left(\frac{-(a_Mz+b_M)}{2\sigma^2}-\frac{\nu^2}{2 \sigma^2} + \frac{\nu \sqrt{a_Mz+b_M
		}}{\sigma^2} \right).
		\end{align}
		Now for very large $b_M$, we can approximate the above expression as
		\begin{equation}
		\exp \left(\frac{-(\beta-\alpha)^2}{2}\right) \approx \exp\left(\frac{-a_Mz-b_M}{2 \sigma^2}\right) \exp\left(\frac{-\nu^2}{2 \sigma^2}\right) \exp \left( \frac{\nu \sqrt{b_M}}{\sigma^2}\right).
		\end{equation}
		Hence, we need to identify at least one pair of normalising constants $a_M$ and $b_M$ such that
		\begin{align}
		\tilde{u}(z) = \lim\limits_{M \to \infty} M & \left(\frac{2 \pi \nu}{\sigma^2} \right)^{\frac{-1}{2}} \exp \left(\frac{-1}{4} \log \left(a_M z+b_M \right) \right) \exp\left(\frac{-a_Mz}{2 \sigma^2}\right) \exp\left(\frac{-\nu^2}{2 \sigma^2}\right) \nonumber \\ & \times  \exp\left(\frac{-b_M}{2 \sigma^2} \right) \exp \left( \frac{\nu \sqrt{b_M}}{\sigma^2}\right) < \infty
		\end{align} and is a non-trivial function of $z$. Note that if we choose $a_M$ and $b_M$ to be of the form, 
		\begin{align}
		a_M &= 2 \sigma^2 \ \text{and} \ 
		b_M = a_M \left\lbrace \log(M)-c_0 \log(\log(M)) + c_1 \sqrt{\log(M)} - c_2\right \rbrace,
		\label{norm_const_iid}
		\end{align}where $c_0,c_1,c_2$ are constants with respect to $M$,
		we can make the following observations. 
		\begin{align}
		\tilde{u}(z) = \lim\limits_{M \to \infty} M & \left(\frac{2 \pi \nu}{\sigma^2} \right)^{\frac{-1}{2}} \exp \left(\frac{-1}{4} \log \left(a_M z+b_M \right) \right) \exp\left(-z\right) \exp\left(\frac{-\nu^2}{2 \sigma^2}\right) \nonumber \\ & \times  \exp\left(-\log(M)+c_0 \log(\log(M)) - c_1 \sqrt{\log(M)} + c_2 \right)  \nonumber  \\ & \times  \exp\left(\frac{\nu \sqrt{2}}{\sigma} \left(\log(M)-c_0 \log(\log(M)) + c_1 \sqrt{\log(M)} - c_2 \right)^{\frac{1}{2}} \right).
		\end{align}
		Now, we have
		\begin{align}
		\lim\limits_{M \to \infty} \frac{a_Mz+b_M}{\log(M)} =\lim\limits_{M \to \infty} (2 \sigma^2) \frac{z+\log(M)-c_0 \log(\log(M)) + c_1 \sqrt{\log(M)} -c_2}{\log(M)} = 2 \sigma^2.
		\end{align}
		Thus, if we choose $c_0=\frac{1}{4}$,
		\begin{equation}
		\lim\limits_{M \to \infty} \exp \left(\frac{-\log(a_Mz+b_M)}{4}  \right) \exp\left(c_0\log(\log(M)) \right) = \exp\left(\frac{-\log(2 \sigma^2)}{4} \right) = \left( 2 \sigma^2 \right)^{\frac{-1}{4}}.
		\end{equation}
		Similarly, note that
		\begin{align}
		& \lim\limits_{M \to \infty} \left(\log(M)-c_0 \log(\log(M)) + c_1 \sqrt{\log(M)} - c_2 \right)^{\frac{1}{2}} - \left(\log(M) \right)^{\frac{1}{2}} \\ = & \lim\limits_{M \to \infty} \left(\left(\sqrt{\log(M)} + \frac{c_1}{2} \right)^2 - \frac{c_1^2}{4} - c_0 \log(\log(M)) - c_2 \right)^{\frac{1}{2}} - \left(\log(M) \right)^{\frac{1}{2}}.
		\end{align}
		Let $p=\sqrt{\log(M)}$, $q=\frac{c_1}{2}$. Note that $p \to \infty$ as $M \to \infty$. After substituting $p$ and $q$, the limit we have to evaluate is given by
		\begin{equation}
		\lim\limits_{M \to \infty} \left((p+q)^2-q^2-c_0 \log(p^2) - c_2 \right)^{\frac{1}{2}} - p = \lim\limits_{M \to \infty} p \left\lbrace 1+\frac{2q}{p} - \frac{c_0 \log(p^2)}{p^2} - \frac{c_2}{p^2} \right \rbrace^{\frac{1}{2}}-p \label{step21}.
		\end{equation}
		Now, for large $p$, (\ref{step21}) can be approximated as $ \lim\limits_{M \to \infty} p \left\lbrace 1+\frac{2q}{p} \right \rbrace^{\frac{1}{2}}-p $. Using the binomial expansion of $\left\lbrace 1+\frac{2q}{p} \right \rbrace^{\frac{1}{2}}$, this limit can be evaluated as follows:
		\begin{equation}
		\lim\limits_{M \to \infty} p \left\lbrace 1+\frac{2q}{p} \right \rbrace^{\frac{1}{2}}-p = \lim\limits_{M \to \infty}   p \left \lbrace 1 + \frac{2q}{2p} - \frac{4q^2}{8 p^2} + \mathcal{O}(p^{-3})\right \rbrace -p = q = \frac{c_1}{2}.
		\end{equation}
		Thus, choosing $c_1 = \frac{\nu \sqrt{2}}{\sigma}$, we have $\lim\limits_{M \to \infty} \exp \left(-c_1 \sqrt{\log(M)} \right) \times \exp \left(\frac{\nu \sqrt{b_M}}{\sigma^2} \right) = \exp \left( \frac{\nu^2}{\sigma^2} \right)$. Thus, with $c_0=\frac{1}{4}$ and 
		$c_1 = \frac{\nu \sqrt{2}}{\sigma}$, we have
		\begin{align}
		\tilde{u}(z) = \lim\limits_{M \to \infty} \color{orange}\underbrace{\color{black}M} & \left(\frac{2 \pi \nu}{\sigma^2} \right)^{\frac{-1}{2}} \color{violet}\underbrace{\color{black}\exp \left(\frac{-1}{4} \log \left(a_M z+b_M \right) \right)} \color{black}{\color{black}\exp\left(-z\right)} \exp\left(\frac{-\nu^2}{2 \sigma^2}\right) \nonumber \\ & \times \color{orange}\underbrace{\color{black}\exp(-\log(M))} \color{violet}\underbrace{\color{black}\exp\left( \frac{\log(\log(M))}{4}\right)} \color{black}{\color{black}\exp(c_2)}\color{green}\underbrace{\color{black} \exp\left(- c_1 \sqrt{\log(M)} ) \right) } \nonumber \\ & \times  \color{green}\underbrace{\color{black}\exp\left(\frac{\nu \sqrt{2}}{\sigma} \left(\log(M)-c_0 \log(\log(M)) + c_1 \sqrt{\log(M)} - c_2 \right)^{\frac{1}{2}} \right)}\color{black}.
		\end{align}
		That is, $\tilde{u}(z)   = \left(\frac{2 \pi \nu}{\sigma^2} \right)^{\frac{-1}{2}} \color{violet}\underbrace{\color{black}\left(2 \sigma^2 \right)^{\frac{-1}{4}}}\color{black} \exp(-z) \exp \left(\frac{-\nu^2}{2 \sigma^2} \right) \exp\left(c_2 \right)\color{green}\underbrace{\color{black}\exp \left(\frac{\nu^2}{\sigma^2} \right)}\color{black} $. Thus, choosing $c_2 = \frac{1}{2} \log \left( \frac{2 \sqrt{2} \pi \nu}{\sigma} \exp\left(\frac{-\nu^2}{\sigma^2} \right)\right)$
		, we have $ \tilde{u}(z) = \exp(-z)$.
		
		\subsubsection{Non-identical LOS Components}\label{kth_max_inid_derivation}
		Next, we move on to the case where the LOS components of the received signal are non-identical and the NLOS components are identical. Thus we have, $\{\nu_m\}_{m=1}^M$ are non-identical. Furthermore, we assume that there are only finite possible values that $\nu_m$ can take, i.e $\nu_m \in \{\nu_1,\nu_2,\cdots,\nu_K\}$ for all $m \in \{1,\cdots,M\}$ and $K$ finite. Let, $M_i := \sum\limits_{m=1}^M \mathbb{I}_{\nu_m=\nu_i}, \ 1\leq i \leq K$, i.e $M_i$ represents the number of times $\nu_i$ occurs among the $M$ values of $\{\nu_m\}_{m=1}^M$. In this case, we need to identify $a_M$ and $b_M$ such that the following limit is finite and a non-trivial function of $z$.
		\begin{equation}
		\tilde{u}(z) = \lim\limits_{M \to \infty} \sum\limits_{m=1}^M \mathcal{Q}_1 \left(\frac{\nu_m}{\sigma}, \frac{\sqrt{a_Mz+b_M}}{\sigma} \right).
		\end{equation}
		Now, using the same approximations we used in the last derivation, for large values of $\beta$, we have
		\begin{equation}
		\mathcal{Q}_1(\alpha_m,\beta) \approx \left(2 \pi \alpha_m \beta \right)^{\frac{-1}{2}} \exp \left(\frac{-(\beta-\alpha_m)^2}{2} \right),
		\end{equation}
		where $\alpha_m  = \frac{\nu_m}{\sigma}$ and $\beta = \frac{\sqrt{a_Mz+b_M}}{\sigma}$. Thus, $\tilde{u}(z)$ can be computed as 
		\begin{align}
		\tilde{u}(z)   = \lim\limits_{M \to \infty} & {\exp \left(\frac{-1}{4} \log \left(a_M z+b_M \right) \right)} \exp\left(\frac{-a_Mz}{2 \sigma^2}\right) {\exp\left(\frac{-b_M}{2 \sigma^2}\right)} \nonumber \\ & \times  {\sum\limits_{m=1}^M \left(\frac{2\pi \nu_m}{\sigma^2} \right)^{\frac{-1}{2}} \exp\left(\frac{-\nu_m^2}{2 \sigma^2}\right) \exp \left( \frac{\nu_m \sqrt{b_M}}{\sigma^2}\right)}.
		\end{align}
		This can be further simplified as
		\begin{align}
		\tilde{u}(z)   = \lim\limits_{M \to \infty} & {\exp \left(\frac{-1}{4} \log \left(a_M z+b_M \right) \right)} \exp\left(\frac{-a_Mz}{2 \sigma^2}\right) {\exp\left(\frac{-b_M}{2 \sigma^2}\right)} \nonumber \\ & {\sum\limits_{i=1}^K M_i \left(\frac{2\pi \nu_i}{\sigma^2} \right)^{\frac{-1}{2}} \exp\left(\frac{-\nu_{i}^2}{2 \sigma^2}\right) \exp \left( \frac{\nu_i \sqrt{b_M}}{\sigma^2}\right)}.
		\label{uz_inid}
		\end{align}
		Next, we propose the following choice of normalising constants:
		\begin{align}
		a_M = 2 \sigma^2 \ \text{and} \ b_M  = a_M \times c_M \ \text{where}
		\label{norm_const_inid}
		\end{align}
		\begin{align}
		c_M = \left(\log(\tilde{M}) - \frac{\log(\log(\tilde{M}))}{4}  + \frac{\tilde{\nu} \sqrt{2}}{\sigma} (\log(\tilde{M}))^{\frac{1}{2}}  - \frac{1}{2} \log \left( \frac{2 \sqrt{2}\pi \tilde{\nu}}{\sigma}\exp\left(\frac{-\tilde{\nu}^2}{\sigma^2} \right) \right)\right).
		\end{align}
		Now, for the above choice of normalising constants we analyse (\ref{uz_inid}) for different values of $i$ and hence decide what values to choose for $\tilde{M}$ and $\tilde{\nu}$. Here we have, \begin{align}
		& \lim\limits_{M \to \infty} \exp(-z) M_i \exp \left(\frac{-1}{4} \log \left(a_M z+b_M \right) \right)  \left(\frac{2\pi \nu_i}{\sigma^2} \right)^{\frac{-1}{2}} \exp \left(\frac{-\nu_i^2}{2 \sigma^2}  \right)\exp \left( \frac{\nu_i \sqrt{b_M}}{\sigma^2}\right) \nonumber  \\&  \exp\left(-\log(\tilde{M}) + \frac{\log(\log(\tilde{M}))}{4}  - \frac{\tilde{\nu} \sqrt{2}}{\sigma} (\log(\tilde{M}))^{\frac{1}{2}}  + \frac{1}{2} \log \left( \frac{2 \sqrt{2}\pi \tilde{\nu}}{\sigma}\exp\left(\frac{-\tilde{\nu}^2}{\sigma^2} \right) \right)\right) .
		\end{align}
		We can rewrite the above expression as 
		\begin{align}
		& \lim\limits_{M \to \infty} \exp(-z) \frac{M_i}{\tilde{M}} \exp \left(\frac{-1}{4} \log \left(\frac{a_Mz+b_M}{\log(\tilde{M})} \right) \right) \left(\frac{2 \pi \nu_i}{\sigma^2}
		\right)^{\frac{-1}{2}} \left(\frac{2 \sqrt{2}\pi \tilde{\nu}}{\sigma} \right)^{\frac{1}{2}} \exp\left(\frac{-\nu_i^2}{2 \sigma^2} \right) \nonumber  \\ & \times
		\exp \left(\frac{-\tilde{\nu}^2}{2 \sigma^2} \right) \exp \left(\frac{\nu_i  \sqrt{b_M}}{\sigma^2} \right)  \exp \left(\frac{-\sqrt{2}\tilde{\nu}}{\sigma}(\log(\tilde{M}))^{\frac{1}{2}} \right).
		\end{align}
		Now, suppose we choose $\tilde{M}$ such that $\tilde{M} \to \infty$ as $M \to \infty$, then we have
		\begin{equation}
		\lim\limits_{M \to \infty} \exp \left(\frac{-1}{4} \log \left(\frac{a_Mz+b_M}{\log(\tilde{M})} \right) \right) = (2 \sigma^2)^{\frac{-1}{4}}.
		\end{equation}
		Similarly,
		\begin{equation}
		\exp \left(\frac{\nu_i \sqrt{2} \sqrt{c_M}}{\sigma} \right)  \exp \left(\frac{-\sqrt{2}\tilde{\nu}}{\sigma}(\log(\tilde{M}))^{\frac{1}{2}} \right) = \exp \left(\frac{\sqrt{2}}{\sigma} (\nu_i\sqrt{c_M}-\tilde{\nu}\sqrt{\log(\tilde{M})})\right).
		\end{equation}
		Here, we first analyse the following term
		\begin{align}
		\lim\limits_{M \to \infty}   \nu_i\left(\log(\tilde{M}) - \frac{\log(\log(\tilde{M}))}{4} + \frac{\tilde{\nu} \sqrt{2}}{\sigma} (\log(\tilde{M}))^{\frac{1}{2}} - \frac{1}{2} \log \left(c_2 \right) \right)^{\frac{1}{2}} - \tilde{\nu} \sqrt{\log(\tilde{M})}   ,
		\end{align}
		where $c_2=\frac{2 \sqrt{2}\pi \tilde{\nu}}{\sigma}\exp\left(\frac{-\tilde{\nu}^2}{\sigma^2} \right) $. Thus, we have
		\begin{equation}
		\lim\limits_{M \to \infty} \nu_i \left(\left(\sqrt{\log(\tilde{M})} + \frac{\tilde{\nu} }{\sqrt{2}\sigma} \right)^2 - \frac{\tilde{\nu}{^2}}{2 \sigma^2} -  \frac{\log(\log(\tilde{M}))}{4} - \frac{1}{2} \log \left(c_2 \right)  \right)^{\frac{1}{2}} - \tilde{\nu} \left(\log(\tilde{M}) \right)^{\frac{1}{2}}.
		\end{equation}
		
		Let $p=\sqrt{\log(\tilde{M})}$, $q=\frac{\tilde{\nu}}{\sqrt{2} \sigma}$ and $c_0=\frac{1}{4}$. Note that $p \to \infty$ as $\tilde{M} \to \infty$. After substituting $p$ and $q$, the limit we have to evaluate is given by
		\begin{equation}
		\lim\limits_{M \to \infty} \nu_i\left((p+q)^2-q^2-c_0 \log(p^2) - c_2 \right)^{\frac{1}{2}} - \tilde{\nu}p =    \lim\limits_{M \to \infty} \nu_i p \left\lbrace 1+\frac{2q}{p} - \frac{c_0 \log(p^2)}{p^2} - \frac{c_2}{p^2} \right \rbrace^{\frac{1}{2}}-\tilde{\nu} p  
		\label{step111}
		\end{equation}
		Now, for large $p$, (\ref{step111}) can be approximated as $\lim\limits_{M \to \infty} \nu_i p \left\lbrace 1+\frac{2q}{p} \right \rbrace^{\frac{1}{2}}-\tilde{\nu} p$. Using the binomial expansion of $\left\lbrace 1+\frac{2q}{p} \right \rbrace^{\frac{1}{2}}$, this limit can be evaluated as:
		\begin{equation}
		\lim\limits_{M \to \infty} \nu_i p \left\lbrace 1+\frac{2q}{p} \right \rbrace^{\frac{1}{2}}-\tilde{\nu} p = \lim\limits_{M \to \infty}   \nu_i p \left \lbrace 1 + \frac{2q}{2p} - \frac{4q^2}{8 p^2} + \mathcal{O}(p^{-3})\right \rbrace - \tilde{\nu} p = (\nu_i-\tilde{\nu})p+{\nu_i}q.
		\end{equation}
		Thus, we have
		\begin{equation}
		\lim\limits_{M \to \infty} \exp(-z) \sum\limits_{i=1}^2 \frac{M_i}{\tilde{M}} \sqrt{\frac{\tilde{\nu}}{\nu_i}} \exp \left(\frac{-\nu_i^2}{2 \sigma^2} \right) \exp \left(\frac{-\tilde{\nu}^2}{2 \sigma^2} \right) \exp\left(\frac{\sqrt{2}}{\sigma} \left((\nu_i-\tilde{\nu})\sqrt{\log(\tilde{M})}+\frac{\nu_i\tilde{\nu}}{\sqrt{2} \sigma} \right)\right).
		\end{equation}
		Now, let us choose $\tilde{\nu}$ to be the largest among $\{\nu_1,\cdots,\nu_K\}$ such that
		\begin{equation}
		\lim\limits_{M \to \infty} \frac{M_i}{\tilde{M}} \exp \left(\frac{\sqrt{2\log(\tilde{M}{})}(\nu_i-\tilde{\nu})}{\sigma} \right) < \infty \quad  \text{for} \quad \nu_i\neq \tilde{\nu},
		\label{m_condn1}
		\end{equation}
		where $\tilde{M} = \sum\limits_{m=1}^M \mathbb{I}_{\nu_m=\tilde{\nu}}$. Note that with finite choices for $\nu_i$ and $M \to \infty$, there exists at least one $\nu_i; \ i \in \{1,\cdots,K\}$ that satisfies the condition in (\ref{m_condn1}).
		In other words, suppose $K=2$, $M_1=\mathcal{O}(f_1(M))$ and $M_2=\mathcal{O}(f_2(M))$, then $\tilde{\nu}=\nu_i$ such that $f_i(M) \geq f_j(M)$ for all $j\neq i$. Now if $f_1(M)=f_2(M)$, we choose the largest $\nu_i$ among them as $\tilde{\nu}$. Thus we have,
		\begin{align}
		\lim\limits_{M \to \infty} & \exp(-z)  \frac{M_i}{\tilde{M}} \exp \left(\frac{-1}{4} \log \left(\frac{a_Mz+b_M}{\log(\tilde{M})} \right) \right) \left(\frac{2 \pi \nu_i}{\sigma^2}
		\right)^{\frac{-1}{2}} \left(\frac{2 \sqrt{2}\pi \tilde{\nu}}{\sigma} \right)^{\frac{1}{2}} \exp\left(\frac{-\nu_i^2}{2 \sigma^2} \right) \nonumber \\ & \times 
		\exp \left(\frac{-\tilde{\nu}^2}{2 \sigma^2} \right) \exp \left(\frac{\nu_i  \sqrt{b_M}}{\sigma^2} \right)  \exp \left(\frac{-\sqrt{2}\tilde{\nu}}{\sigma}(\log(\tilde{M}))^{\frac{1}{2}} \right) \\ 
		= \exp(-z) & \frac{M_i}{\tilde{M}} \sqrt{\frac{\tilde{\nu}}{\nu_i}} \exp \left(\frac{-\nu_i^2-\tilde{\nu}^2}{2 \sigma^2} \right)  \exp\left(\frac{\sqrt{2}}{\sigma} \left((\nu_i-\tilde{\nu})\sqrt{\log(\tilde{M})}+\frac{\nu_i\tilde{\nu}}{\sqrt{2} \sigma} \right)\right) < \infty,
		\end{align}for all values of $i$.
		Hence,
		\begin{equation}
		\tilde{u}(z) = \exp(-z) \sum\limits_{i=1}^K \frac{M_i}{\tilde{M}} \sqrt{\frac{\tilde{\nu}}{\nu_i}} \exp \left(\frac{-\nu_i^2-\tilde{\nu}^2}{2 \sigma^2} \right)  \exp\left(\frac{\sqrt{2}}{\sigma} \left((\nu_i-\tilde{\nu})\sqrt{\log(\tilde{M})}+\frac{\nu_i\tilde{\nu}}{\sqrt{2} \sigma} \right)\right),
		\label{uz_tilde_inid}
		\end{equation}
		where $\tilde{\nu}$ is the largest among $\{\nu_1,\cdots,\nu_K\}$ such that (\ref{m_condn1}) is satisfied. Let 
		
		\begin{equation}
		p(M_1,\cdots,M_K,\nu_1,\cdots,\nu_K,\sigma):=\sum\limits_{i=1}^K \frac{M_i}{\tilde{M}} \sqrt{\frac{\tilde{\nu}}{\nu_i}} \exp \left(\frac{-\nu_i^2-\tilde{\nu}^2}{2 \sigma^2} \right)  \exp\left(\frac{\sqrt{2}}{\sigma} \left((\nu_i-\tilde{\nu})\sqrt{\log(\tilde{M})}+\frac{\nu_i\tilde{\nu}}{\sqrt{2} \sigma} \right)\right),
		\end{equation} then we have $\tilde{u}(z) = \exp(-z) \times p(M_1,\cdots,M_K,\nu_1,\cdots,\nu_K,\sigma)$.
		Furthermore, note that for the case of very large $\tilde{M}$, $\sqrt{\log (\tilde{M})} \to \infty$ and hence for $\nu_i<\tilde{\nu}$, we have
		\begin{equation}
		\lim\limits_{M \to \infty}  \exp \left (\frac{\sqrt{2}}{\sigma}\left((\nu_i-\tilde{\nu})\sqrt{\log(\tilde{M})}+ \frac{\nu_2 \nu_1}{\sqrt{2}\sigma} \right) \right) = 0.
		\end{equation} Similarly, from our choice of $\tilde{\nu}$ and $\tilde{M}$, there cannot exist another $\nu_i> \tilde{\nu}$ such that $M_i >> \tilde{M}$. Hence, for very large $\tilde{M}$, we have $\lim\limits_{M \to \infty} \frac{M_i}{\tilde{M}} = 0$ and hence $\tilde{u}(z)$ again takes the form $\exp(-z)$. \\
		For both the cases discussed above when $k=1$, we have $\tilde{\phi}_k(z)=\exp(-\exp(-z))$ which is a non-degenerate distribution function. Note that according to \cite{barakat2002limit}, this is sufficient to ensure the convergence of $\tilde{\phi}_k(z)$ for all values of $k>1$. Thus, the distribution of the normalised $k$-th maximum for both the two cases discussed above is given by
		\begin{equation}
		\tilde{\phi}_k(z) = \sum\limits_{m=0}^{k-1} \frac{\exp(-mz)}{m!} \exp(-\exp(-z)) = \frac{\Gamma(k,\exp(-z))}{\Gamma(k)},
		\label{k_th_max_exprsn}
		\end{equation}where $\Gamma(.,.)$ is the upper incomplete Gamma function \cite{upper_gamma} and  $\Gamma(.)$ is the Gamma function \cite{gamma}.

		\section{Proof for Theorem \ref{rate_series_thm}} \label{proof_rate}
		Note that the expectation we need to evaluate (for all the special cases) is given by
		\begin{equation}
		R_{(M-k+1:M)} = \mathbb{E}\left[\log_2 \left(1+ \gamma_s Z_{(M-k+1:M)} \right) \right]
		\label{rate_expect}
		\end{equation}where the RV $Z_{(M-k+1:M)}$ has the following pdf:
		\begin{equation}
		f_{Z_{(M-k+1:M)}}(z) = \frac{p^k}{a_M \Gamma(k)} \exp \left(\frac{-k(z-b_M)}{a_M} \right) \exp \left( -p \exp\left(\frac{-(z-b_M)}{a_M} \right) \right).
		\end{equation}
		In the following lines we try to evaluate the expectation in (\ref{rate_expect}) using steps very similar to derivation in \cite[Appendix C]{kalyani2012analysis}. Note that the expectation in (\ref{rate_expect}) can be evaluated as 
		\begin{equation}
		R_{(M-k+1:M)} = \int\limits_0^{\infty} \log_2(1+z)  \  f_{Z_{(M-k+1:M)}}(z) \ dz.
		\label{intg_expr}
		\end{equation}
		Now, substituting $y=\frac{z-b_M}{a_M}$, we have
		\begin{equation}
		R_{(M-k+1:M)} = \frac{p^k}{\Gamma(k)} \int\limits_{\frac{-b_M}{a_M}}^{\infty} \log_2\left(1+\gamma_s(a_My+b_M) \right) \exp(-ky) \exp(-p \exp(-y)) \ dy. 
		\end{equation}
		Using the series expansion for $\exp(-p \exp(-y))$, we can rewrite the previous expression as 
		\begin{equation}
		R_{(M-k+1:M)} = \frac{p^k}{\Gamma(k)} \int\limits_{\frac{-b_M}{a_M}}^{\infty} \log_2\left(1+\gamma_s(a_My+b_M) \right) \exp(-ky) \sum\limits_{n=0}^\infty \frac{(-1)^n p^n e^{-ny}}{n!} \ dy.
		\label{rate_int_first}
		\end{equation}
		Now, the integral and the summation in (\ref{rate_int_first}) can be interchanged. This argument can be proved by  applying the Lebesgue dominated convergence theorem which is stated below:
		\begin{theorem}
			Suppose $f_{m}: M \rightarrow \mathbb{R}$ is a sequence of measurable functions, and if we can find another sequence of non-negative measurable functions $d_{m}: M \rightarrow \mathbb{R}$ such that $\left|f_{m}(x)\right| \leq d_{m}(x), \forall m$ and almost all $x$ and $\sum_{m=0}^{\infty} d_{m}(x)$
			converges and $\sum_{m=0}^{\infty} \int d_{m}(x)<\infty,$ then
			$$
			\int_{M} \sum_{m=0}^{\infty} f_{m}(x) d x=\sum_{m=0}^{\infty} \int_{M} f_{m}(x) d x
			$$
		\end{theorem}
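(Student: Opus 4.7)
The plan is to deduce the stated series form of the dominated convergence theorem directly from the classical Lebesgue dominated convergence theorem, applied to the sequence of partial sums. Define $S_n(x) := \sum_{m=0}^n f_m(x)$ and $D(x) := \sum_{m=0}^\infty d_m(x)$. The hypothesis $|f_m(x)| \leq d_m(x)$ a.e.\ immediately gives $|S_n(x)| \leq \sum_{m=0}^n d_m(x) \leq D(x)$ for all $n$ and almost every $x$, so the single function $D$ uniformly dominates the whole sequence $\{S_n\}$.

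Next I would show that $D$ is integrable. Since each $d_m$ is nonnegative and measurable, the partial sums $D_N(x) := \sum_{m=0}^N d_m(x)$ increase pointwise to $D(x)$, and $D$ is finite almost everywhere by the assumed convergence of $\sum_m d_m(x)$. The monotone convergence theorem (equivalently, Tonelli for nonnegative measurable summands) yields
\begin{equation*}
\int_M D\, dx \;=\; \lim_{N \to \infty} \sum_{m=0}^N \int_M d_m\, dx \;=\; \sum_{m=0}^\infty \int_M d_m\, dx \;<\; \infty,
\end{equation*}
by the second hypothesis. Hence $D \in L^1(M)$.

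The bound $\sum_{m=0}^\infty |f_m(x)| \leq D(x) < \infty$ a.e.\ shows that $\sum_m f_m(x)$ converges absolutely almost everywhere; let $S(x)$ denote its sum, defined arbitrarily on the null set where absolute convergence fails. Then $S_n \to S$ pointwise a.e., each $S_n$ is measurable (as a finite sum of measurable functions), and $|S_n| \leq D \in L^1$. The classical dominated convergence theorem now supplies $\int_M S_n\, dx \to \int_M S\, dx$. On the other hand, linearity of the integral gives $\int_M S_n\, dx = \sum_{m=0}^n \int_M f_m\, dx$, whose limit as $n \to \infty$ is $\sum_{m=0}^\infty \int_M f_m\, dx$ by the definition of an infinite series. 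Equating the two limits delivers the claimed interchange $\int_M \sum_m f_m\, dx = \sum_m \int_M f_m\, dx$.

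Because the statement is essentially a repackaging of the classical DCT, no step is a genuine obstacle; the only point demanding a small amount of care is the verification $D \in L^1$, which rests on swapping an integral and a sum of \emph{nonnegative} terms (justified by monotone convergence together with the summability hypothesis $\sum_m \int d_m < \infty$), and on checking that the pointwise convergence hypothesis for $\sum_m d_m(x)$ makes $D$ finite a.e.\ so that it is a bona fide measurable function rather than merely an extended-real limit. Once these two technicalities are dispatched, the reduction to classical DCT is immediate and the conclusion follows without further computation.
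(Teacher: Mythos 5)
Your proof is correct. Note that the paper itself offers no proof of this statement: it is quoted verbatim as a known form of the Lebesgue dominated convergence theorem and immediately applied in Appendix~B to justify swapping the sum and integral in the throughput derivation, so there is no in-paper argument to compare against. Your reduction to the classical DCT via the partial sums $S_n=\sum_{m=0}^{n}f_m$, dominated by the single function $D=\sum_{m}d_m$ whose integrability follows from monotone convergence together with $\sum_{m}\int d_m<\infty$, is the standard and complete derivation. One minor observation: the hypothesis that $\sum_{m}d_m(x)$ converges pointwise is actually redundant, since $\sum_{m}\int d_m<\infty$ and monotone convergence already force $D<\infty$ almost everywhere --- a point your argument effectively establishes when it verifies $D\in L^1(M)$.
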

		We can apply Lebesgue theorem by taking $f_{n}(y)=$ $\frac{(-1)^{n}p^n}{n!} \log_2 \left(1+\gamma_s (b_{M}+ a_{M} y)\right) e^{-(k+n) y}$ and $d_{n}(y)=\frac{p^n}{n!} \log_2 \left(1+\gamma_s (b_{M}+ a_{M} y)\right) e^{-(k+n) y}$. Let us first verify if $\sum\limits_{n=0}^\infty d_n(y)$ converges. Applying the ratio test to the series $\sum\limits_{n=0}^\infty d_n(y)$, we have
		\begin{equation}
		\lim\limits_{n \to \infty} \left| \frac{d_{n+1}}{d_n} \right| = \frac{p}{(n+1)} e^{-y} = 0.
		\end{equation}
		Thus, we conclude that the series converges. Next, we verify the finiteness of $\sum\limits_{n=0}^\infty \int d_n(y)$, where $h_n:=\int d_n(y)$. Here we have,
		\begin{equation}
		\sum\limits_{n=0}^\infty h_n=   \sum\limits_{n=0}^\infty \frac{p^n}{n!} \int\limits_{\frac{-b_M}{a_M}}^\infty   \log_2 \left(1+\gamma_s (b_{M}+ a_{M} y)\right) e^{-(k+n) y} \ dy .
		\end{equation}
		Substituting, $1+\gamma_s (b_{M}+ a_{M} y)=x$, we have
		\begin{align}
		\sum\limits_{n=0}^\infty h_n  & = \sum\limits_{n=0}^\infty \frac{p^n}{n!} \frac{\exp \left(\frac{(k+n)(1+\gamma_s b_M)}{\gamma_s a_M} \right)}{\gamma_s a_M} \int\limits_1^\infty \log_2(x) \exp \left(\frac{-(k+n)x}{\gamma_s a_M} \right)  \ dx, \\
		& = \sum\limits_{n=0}^\infty \frac{p^n}{n! (k+n)} \exp \left(\frac{(k+n)(1+\gamma_s b_M)}{\gamma_s a_M} \right)  \Gamma \left(0,\frac{k+n}{a_M \gamma_s} \right).
		\end{align}
		Again applying ratio test on the above series, we have
		\begin{equation}
		\lim\limits_{n \to \infty} \frac{h_{n+1}}{h_n} =  \lim\limits_{n \to \infty }  \frac{p (k+n)}{n+1 (k+n+1)} \exp \left(\frac{1+\gamma_s b_M}{\gamma_s a_M} \right) \ \frac{\Gamma\left(0,\frac{k+n+1}{a_M \gamma_s} \right)}{\Gamma\left(0,\frac{k+n}{a_M \gamma_s} \right)}. 
		\end{equation}
		Using properties of the incomplete gamma function, one can note that $\frac{\Gamma\left(0,\frac{k+n+1}{a_M \gamma_s} \right)}{\Gamma\left(0,\frac{k+n}{a_M \gamma_s} \right)} < 1$. Hence,
		we have
		\begin{equation}
		\lim\limits_{n \to \infty} \frac{h_{n+1}}{h_n} = 0. 
		\end{equation}Thus we conclude that the series $\sum\limits_{n=0}^\infty h_n$ converges. This, in turn allows us to interchange the summation and integration in (\ref{rate_int_first}). Thus, we have
		\begin{equation}
		R_{(M-k+1:M)} = \frac{p^k}{\Gamma(k)} \sum\limits_{n=0}^\infty \frac{(-1)^n p^n}{n!} \int\limits_{\frac{-b_M}{a_M}}^\infty \log_2 \left(1+\gamma_s (b_{M}+ a_{M} y)\right) \ e^{-y(k+n)} \ dy.
		\end{equation}
		Again substituting $1+\gamma_s(a_My+b_M) = x$, we have
		\begin{equation}
		R_{(M-k+1:M)} = \frac{p^k}{\Gamma(k)} \sum\limits_{n=0}^\infty \frac{(-1)^n p^n}{n!} \frac{\exp \left(\frac{(k+n)(1+\gamma_sb_M)}{\gamma_s a_M} \right)}{\gamma_s a_M} \ \int\limits_1^\infty \log_2(x) \ \exp\left( \frac{-(k+n)x}{\gamma_s a_M} \right) \ dx.
		\end{equation}
		Evaluating the above integral we have
		\begin{equation}
		R_{(M-k+1:M)} = \frac{p^k}{\Gamma(k)} \sum\limits_{n=0}^\infty \frac{(-1)^n p^n}{n!} \frac{\exp \left(\frac{(k+n)(1+\gamma_sb_M)}{\gamma_s a_M} \right)}{\gamma_s a_M} \ \frac{a_M \gamma_s \Gamma\left(0,\frac{k+n}{a_M \gamma_s} \right)}{(k+n)}.
		\end{equation}
		Note that the above expression can be rewritten in terms of the exponential integral function using the relation, $\Gamma(0,x)=E_1(x)$ \cite[(5)]{exp_integral_1}. Thus, we have the expression in (\ref{sum_expr}).
	\end{appendices}
	
	\bibliographystyle{IEEEtran}
	\bibliography{reference}
\end{document}